\newtheorem{theorem}{Theorem}
\newtheorem{lemma}[theorem]{Lemma}
\newtheorem{corollary}[theorem]{Corollary}
\newtheorem{proposition}[theorem]{Proposition}
\newtheorem{property}[theorem]{Property}
\newtheorem{definition}[theorem]{Definition}
\newtheorem{remark}[theorem]{Remark}
\newcommand{\OpenDreamKit}{the \href{http://opendreamkit.org}{OpenDreamKit} \href{https://ec.europa.eu/programmes/horizon2020/}{Horizon 2020} \href{https://ec.europa.eu/programmes/horizon2020/en/h2020-section/european-research-infrastructures-including-e-infrastructures}{European Research Infrastructures} project (\#\href{http://cordis.europa.eu/project/rcn/198334_en.html}{676541})}
\newlength{\myfigsize}
\newenvironment{smatrix}{\left[\begin{smallmatrix}}{\end{smallmatrix}\right]}
\newcommand{\mat}[1]{\mathsf{#1}\xspace}
\newcommand{\RPM}{\ensuremath{\mathcal{R}}\xspace}
\newcommand{\LTP}{\ensuremath{\text{Left}}\xspace}
\newcommand{\GO}[1]{\ensuremath{O(#1)}\xspace}
\newcommand{\SO}[1]{\ensuremath{O\tilde\ (#1)}\xspace}
\begin{document}

\begin{frontmatter}
  \title{Time and space efficient generators for quasiseparable matrices}
\tnotetext[t1]{This work is partly funded by \OpenDreamKit.}
  \author{Clément Pernet}
  \address{Université Grenoble Alpes, Laboratoire Jean Kuntzmann, \\ ÉNS de Lyon, Laboratoire de l'Informatique du
    Parall\'elisme.}
  \ead{clement.pernet@imag.fr}
  \ead[url]{http://ljk.imag.fr/membres/Clement.Pernet/}
  \author{Arne Storjohann}
  \address{David R. Cheriton School of Computer Science\\ University of Waterloo,
    Ontario, Canada, N2L 3G1}
  \ead{astorjoh@cs.uwaterloo.ca}
  \ead[url]{https://cs.uwaterloo.ca/~astorjoh/}
  \begin{abstract}
  The class of quasiseparable matrices is defined by the property that any
  submatrix entirely below or above the main diagonal has small rank, namely
  below a bound called the order of quasiseparability. These matrices
  arise naturally in solving PDE's for particle interaction with the Fast
 Multi-pole Method (FMM), or computing generalized eigenvalues. From these
 application fields, structured
 representations and algorithms have been designed in numerical linear algebra
 to compute with these matrices in time linear in the matrix dimension and
 either quadratic or cubic in  the quasiseparability order.
 Motivated by the design of the general purpose exact linear algebra library
 LinBox, and by algorithmic applications in algebraic computing, we adapt
 existing techniques introduce novel ones to use quasiseparable matrices in
 exact linear algebra, where sub-cubic matrix arithmetic is available.
 In particular, we will show, the connection between the notion of
 quasiseparability and the rank profile matrix invariant, that we have
 introduced in 2015. It results in two new structured representations, one being
 a simpler variation on the hierarchically semiseparable storage, and the second
 one exploiting the generalized Bruhat decomposition.
As a consequence, most basic operations, such as computing the quasiseparability
orders, applying a vector, a block vector, multiplying two quasiseparable
matrices together, inverting a quasiseparable matrix, can be at least as fast
and often faster than previous existing algorithms.

  \end{abstract}
  \begin{keyword}
    Quasiseparable; Hierarchically Semiseparable; Rank profile matrix;
    Generalized Bruhat decomposition; Fast matrix arithmetic.
  \end{keyword}
\end{frontmatter}

\section{Introduction}

We consider the class of quasiseparable matrices, defined by a bounding
condition on the ranks of the 
submatrices in their lower and upper triangular parts. These structured matrices
originate mainly from two distinct application fields: computing generalized
eigenvalues~\citep{GKK85,EiGo99}, and solving
partial differential equations for particule simulation with the
fast multipole method~\citep{CGR88}. This class also arise naturally, as it includes the
closure under inversion of the class of banded matrices.
Among the several definitions used in the litterature, we will use that
of~\cite{EiGo99} for the class of quasiseparable matrices.
\begin{definition}\label{def:quasisep}
  An $n\times n$ matrix $\mat{M}$ is $(r_L,r_U)$-quasi\-se\-pa\-ra\-ble if its
  strictly lower and upper  triangular parts satisfy the following low rank structure:
  for all $1\leq k\leq n-1$,
  \begin{eqnarray}
    \text{rank}(\mat{M}_{k+1..n,1..k})&\leq& r_L, \label{eq:quasisep:low}\\
    \text{rank}(\mat{M}_{1..k,k+1..n})&\leq& r_U \label{eq:quasisep:up}.
  \end{eqnarray}
The values $r_L$ and $r_U$ 
are the quasiseparable orders of $\mat{M}$.
\end{definition}

Other popular classes of structured matrices like
Toeplitz, Vandermonde, Cauchy, Hankel matrices and their block versions, enjoy a
unified description through the powerful notion of displacement
rank~\citep{KKM79}. Consequently they benefit from space efficient
representations (linear in the dimension $n$ and in the displacement rank~$s$),
and time efficient algorithms to apply them to a vector, compute their inverse
and solve linear systems: most operations have been reduced to polynomial
arithmetic~\citep{Pan90,BiPa94}, and by incorporating fast matrix algebra, this cost
has been reduced from $\SO{s^2n}$ to $\SO{s^{\omega-1}n}$ by~\cite{BJS08} (assuming
that two $n\times n$ matrices can be mutliplied in $\GO{n^\omega}$ for
$2.3728639\leq \omega \leq 3$~\citep{LeG14}).

However quasiseparable matrices do not fit in the framework of rank displacement
structures.
Taking advantage of the low rank properties, mainly two types of structured
representations have been developped together with corresponding dedicated
algorithms to perform common linear algebra operations:
the quasiseparable generators of~\cite{EiGo99,VBGM05,VVM07}, their
generalization for finite block matrices by~\cite{EiGo05}, that coincides with the
sequentially semiseparable (SSS) representation of~\cite{CDGPSVW05} and  
the hierarchically semiseparable representations (HSS) of~\cite{CGP06,XCSGL10}.
We refer to~\citep{VBGM05},~\citep{VVM07} and~\cite{XCSGL10} for a broad bibliographic overview on
 the topic. Note also the alternative approach of Givens and unitary weights in~\cite{DVB07}.


\paragraph{Sequentially Semiseparable representation} The sequentially semiseparable
representation used by~\cite{EiGo99,VBGM05,VVM07,EGO05,BEG16} for 
a matrix $\mat{M}$, consists of $(n-1)$ pairs of
vectors $\mat{p}(i),\mat{q}(i)$ of size $r_L$, $(n-1)$ pairs of vectors $\mat{g}(i),\mat{h}(i)$ of size
$r_U$,  $n-1$ matrices $\mat{a}(i)$ of dimension $r_L\times r_L$, and  $n-1$
matrices $\mat{b}(i)$ of dimension $r_U\times r_U$ and $n$ scalars $d(i)$ such that 
$$
\mat{M}_{i,j} = \left\{
  \begin{array}{ll}
    \mat{p}(i)^T\mat{a}^{>}_{ij} \mat{q}(j), & 1\leq j<i\leq n\\
    d(i),& 1\leq i=j\leq n\\
    \mat{g}(i)^T\mat{b}^{<}_{ij} \mat{h}(j), & 1\leq i<j\leq n\\
  \end{array}
\right.
$$
where
\begin{align*}
  \mat{a}^{>}_{ij} = \mat{a}(i-1)\dots \mat{a}(j+1) &\text{ for } j>i+1, &
  \mat{a}_{j+1,j}&=1,\\
  \mat{b}^{<}_{ij} = \mat{b}(i+1)\dots \mat{b}(i-1) &\text{ for } i>j+1, &
  \mat{b}_{i,i+1}&=1.
  \end{align*}
For $s=\max(r_L,r_U)$, this representation, of size $\GO{n(r_L^2+r_U^2)} =
\GO{s^2n}$ makes it possible to apply a 
  vector in $\GO{s^2n}$ field operations, multiply two quasiseparable
  matrices in time $\GO{s^3n}$ and also compute the inverse of a strongly
  regular matrix in time $\GO{s^3n}$~\citep{EiGo99}. Note that the inefficiency in size
  for these represention can be mitigated using the blocked version of this
  representation of~\cite{EiGo05}.

  \paragraph{The Hierarchically Semiseparable representation}
The Hierarchically Semiseparable representation was introduced in~\cite{CGP06}
and is related to the structure used in the Fast Multipole Method~\citep{CGR88}.
It is based on the splitting of the matrix in four quadrants, the use of rank
revealing factorizations of its off-diagonal quadrants and applying the same
scheme recursively on the diagonal blocks. A further compression is applied to
represent all off-diagonal blocks as linear combinations (called translation
operators) of blocks of a finer recursive order.
While the space and time complexity of the HSS representation is depending on numerous
parameters, the analysis in~\cite{CGP06} seem to indicate that the size of an
HSS representation is $\GO{sn}$, it can be applied to a vector in linear time in its
size, and linear systems can be solved in $\GO{s^2n}$. For the product of two
HSS matrices, we could not find any better estimate than $\GO{s^3n}$ given
by~\cite{SDC07}.

  
\paragraph{Context and motivation}
The motivation here is to propose simplified and improved representations of
quasiseparable matrices (in space and time). Our approach does not focus on
numerical stability for the moment. Our first motivation is indeed to use these
structured matrices in computer algebra where computing e.g. over a finite field
or over  multiprecision integers and rationals does not lead to any numerical
instability. Hence we will assume throughout
the paper that any Gaussian elimination algorithm mentioned has the ability to
reveal ranks. In numerical linear algebra, a special care need to be taken for the
pivoting of LU decompositions~\citep{HWY92,Pan00}, and QR or SVD decompositions are often
preferred~\citep{Chan87,ChIp94}. Part of the methods presented here,
namely that of Section~\ref{sec:RRR}, rely on an arbitrary rank revealing matrix
factorization and can therefore be applied to a setting with numerical instability. 
In the contrary, Section~\ref{sec:CB} relies on a class of Gaussian elimination
algorithm that reveal the rank profile matrix, hence applying it to numerical
setting is future work. 
This study is motivated by the design of new algorithms on polynomial matrices
over a finite field, where
quasiseparable matrices naturally occur, and  more generally by the framework of the
\texttt{LinBox} library~\citep{linbox:2016} for black-box exact linear algebra. 

\paragraph{Contribution}
This paper presents in further details and extends the results of~\cite{Per16},
while also fixing a mistake~\footnote{Equation~(9) in~\cite{Per16} is missing
  the \LTP operators. The resulting algorithms are incorrect. This is fixed in section~\ref{sec:CBxTS}.}.
It proposes two new  structured representations for quasiseparable matrices, a Recursive
Rank Revealing (RRR) representation that can be viewed as a simplified version of the HSS
representation of~\cite{CGP06}, and a representation based on the generalized Bruhat
decomposition, which we name Compact Bruhat (CB) representation.
The later one, is made possible by the connection that we make between the
notion of quasiseparability and a matrix invariant, the rank profile matrix,
that we introduced in~\cite{DPS15} and applied to the generalized Bruhat
decomposition in~\cite{DPS16}. More precisely, we show that the lower and upper
triangular parts of a quasiseparabile matrix have a Generalized Bruhat
decompositions off of which many coefficients can be shaved. The resulting
structure of these decompositions allows to handle them within memory footprint
and time complexity that does not depend on the rank but on the quasiseparable
order (which can be arbitrarily lower).
These two representations use respectively a space \GO{sn\log\frac{n}{s}} (RRR) and \GO{sn} (CB),
hence improving over that of the SSS, \GO{s^2n}, and matching that
of the HSS representation, \GO{sn}.

The complexity of applying a vector remains linear in the size of the
representations. The main improvement in these two representations is in the complexity of
applying them to matrices and computing the matrix inverse, where we replace by $s^{\omega-1}$ the $s^3$ factor of
the SSS or the $s^2$ factor of the HSS
representations.\footnote{Note that most complexities for SSS and HSS in the 
  litterature are given in the form \GO{n^2} or \GO{n}, considering the
  parameter $s$ as a constant. The estimates given here, with the exponent in
  $s$, can be found in the proofs of the related papers or easily derived from
  the algorithms.} 
Table~\ref{tab:contrib}
\begin{table}[htbp]
  \begin{center}
      \begin{tabular}{lllll}
    \toprule
      & SSS & HSS & RRR & CB\\
    \midrule
    Size         & \GO{s^2n} & \GO{sn}  & \GO{sn\log\frac{n}{s}} & \GO{sn}\\
    Construction & \GO{s^2n^2} & \GO{sn^2} & \GO{s^{\omega-2}n^2} & \GO{s^{\omega-2}n^2} \\
    QSxVec & \GO{s^2n} & \GO{sn} & \GO{sn\log\frac{n}{s}} & \GO{sn}\\
    QSxTS  & \GO{s^3n} & \GO{s^2n} & \GO{s^{\omega-1}n\log\frac{n}{s}} & \GO{s^{\omega-1}n}\\
    QSxQS  & \GO{s^3n} & \GO{s^3n} & \GO{s^{\omega-1}n\log^2\frac{n}{s}} & \GO{s^{\omega-2}n^2}\\
    LinSys & \GO{s^3n} & \GO{s^2n} & \GO{s^{\omega-1}n\log^2\frac{n}{s}} & \\
    \bottomrule
      \end{tabular}     
  \caption{Comparing the size and time complexities for basic operations of the
    proposed RRR and CB representations with the existing one SSS and
    HSS on an
    $n\times n$ quasiseparable matrix of order $s$.}\label{tab:contrib}
  \end{center}
\end{table}
compares the two proposed representations with the SSS and the HSS in
their the size, and the complexity of the main basic operations.

\paragraph{Outline}
Section~\ref{sec:prelim} defines and recalls some preliminary notions on left
triangular matrices and the rank profile matrix, that will be used in
Section~\ref{sec:comporders} and~\ref{sec:CB}.
Using the strong connection between the rank profile matrix and the
quasiseparable structure, we first propose in Section~\ref{sec:comporders} an
algorithm to compute the quasiseparability orders $(r_L,r_U)$ of any dense
matrix in \GO{n^2s^{\omega-2}} where $s=\max(r_L,r_U)$.
Section~\ref{sec:generators} then describes the two proposed structured
representations for quasiseparable matrices:
the Recursive Rank Revealing representation (RRR), a simplified HSS representation
based on a binary tree of rank revealing factorizations, and the Compact Bruhat
representation (CB), based on the intermediate Bruhat representation.
Section~\ref{sec:RRR} then presents algorithms to compute an RRR representation,
and perform the most common operations with it: applying a vector, a tall and
skinny matrix, multiplying two quasiseparable matrices in RRR representation,
and computing the inverse of a strongly regular RRR matrix.
Section~\ref{sec:CB} presents algorithms to compute a Compact Bruhat
representation, and multiply it with a vector, a tall and skinny matrix or a dense matrix.

\paragraph{Notations}
Throughout the paper, $\mat{A}_{i..j,k..l}$ will denote the sub-matrix of
$\mat{A}$ of row indices between $i$ and $j$ and column indices between $k$ and $l$.
The matrix of the canonical basis, with a one at position $(i,j)$ will be
denoted by~$\mat{\Delta}^{(i,j)}$. We will denote the identity matrix of order $n$ by
$\mat{I}_n$, the unit antidiagonal of dimension $n$ by $\mat{J}_n$ and the zero
matrix of dimension $m\times n$ by $\mat{0}_{m\times n}$.

\section{Preliminaries}\label{sec:prelim}
\subsection{Left triangular matrices}
We will make intensive use of matrices with non-zero elements only located above the main anti-diagonal.
We will refer to these matrices as left triangular, to avoid any confusion with
upper triangular matrices.
\begin{definition}
  An  $m\times n$ matrix $\mat{A}$  is left triangular if $\mat{A}_{i,j}=0$ for all $i> n-j$.
\end{definition}

The left triangular part of a matrix $\mat{A}$, denoted by $\LTP(\mat{A})$ will
refer to the left triangular matrix extracted from it.
We will need the following property on the left triangular part of the product
of a matrix by a triangular matrix.

\begin{lemma}\label{lem:leftproductup}
  Let  $\mat{A}=\mat{B}\mat{U}$ be an $m\times n$ matrix  where 
$\mat{U}$ is $n\times n$ upper triangular. Then
  $\LTP(\mat{A}) = \LTP(\LTP(\mat{B}) \mat{U})$.
\end{lemma}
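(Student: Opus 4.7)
The plan is to prove the identity entrywise, distinguishing between indices $(i,j)$ inside the left-triangular region (those with $i+j\leq n$) and indices outside it. Outside the region, both sides of the claimed equality vanish by definition of $\LTP$, so the statement is trivial there. All the work is in verifying agreement of the entries at positions $(i,j)$ with $i\leq n-j$.

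To handle the nontrivial case, I would expand the product coordinate-wise. By definition
\[
\mat{A}_{i,j} = \sum_{k=1}^{n} \mat{B}_{i,k}\mat{U}_{k,j},
\]
and upper-triangularity of $\mat{U}$ restricts the sum to $k\leq j$. Independently,
\[
(\LTP(\mat{B})\mat{U})_{i,j} = \sum_{k=1}^{n}\LTP(\mat{B})_{i,k}\mat{U}_{k,j} = \sum_{k\leq \min(n-i,\,j)} \mat{B}_{i,k}\mat{U}_{k,j},
\]
where the inner bound $k\leq n-i$ comes from the definition of $\LTP(\mat{B})$ (which zeroes out rows whose column index exceeds $n-i$), and $k\leq j$ again from the triangularity of $\mat{U}$.

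The key observation is then the following: for $(i,j)$ with $i\leq n-j$, the bound $k\leq j$ automatically implies $k\leq n-i$, so $\min(n-i,j)=j$, and the two truncated sums coincide. Hence $(\LTP(\mat{B})\mat{U})_{i,j}=\mat{A}_{i,j}$ on the left-triangular region, which is exactly what we needed to show.

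The only subtle step is keeping the index bounds straight when passing from $\mat{B}$ to $\LTP(\mat{B})$; the point to emphasize is that $\mat{U}$ being upper triangular is precisely what guarantees that the summation index $k$ never leaves the left-triangular support of $\mat{B}$ when $(i,j)$ lies in the left-triangular support of the output. No hypothesis on $\mat{B}$ beyond its dimensions is required, and no analogous claim would hold for a product by a lower triangular matrix.
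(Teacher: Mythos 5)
Your proof is correct and follows essentially the same route as the paper: expand the $(i,j)$ entry, use upper triangularity of $\mat{U}$ to truncate the sum to $k\leq j$, and observe that for $k\leq j\leq n-i$ the entries of $\mat{B}$ and $\LTP(\mat{B})$ agree. (Only your closing aside is slightly imprecise: an analogous claim does hold for \emph{left} multiplication by a lower triangular matrix, which is exactly Lemma~\ref{lem:leftproductlow}; what fails is right multiplication by a lower triangular factor.)
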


\begin{proof}
  Let $\mat{\bar A} = \LTP(\mat{A}), \mat{\bar B} =\LTP(\mat{B})$.
For $j\leq n-i,$ we have
$
\mat{\bar A}_{i,j} = \sum_{k=1}^n \mat{B}_{i,k} \cdot \mat{U}_{k,j} = \sum_{k=1}^j \mat{B}_{i,k} \cdot \mat{U}_{k,j}
$
as $\mat{U}$ is upper triangular. Now for $k\leq j \leq n-i$, $\mat{B}_{i,k} =
\mat{\bar B}_{i,k}$,
which
proves that the left triangular part of $\mat{A}$ is that of $\LTP(\mat{B})\mat{U}$.
\end{proof}

Applying Lemma~\ref{lem:leftproductup} on $\mat{A}^T$ yields Lemma~\ref{lem:leftproductlow}
\begin{lemma}\label{lem:leftproductlow}
  Let $\mat{A}=\mat{L}\mat{B}$  be an $m\times n$ matrix where
$\mat{L}$ is $m\times m$ lower triangular. Then
  $\LTP(\mat{A}) = \LTP(\mat{L}\LTP(\mat{B}))$.
\end{lemma}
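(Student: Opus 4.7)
The plan is to deduce this statement from Lemma~\ref{lem:leftproductup} by transposition, exactly as the hint suggests. Starting from $\mat{A}=\mat{L}\mat{B}$, I would take transposes to get $\mat{A}^T=\mat{B}^T\mat{L}^T$, where $\mat{L}^T$ is $m\times m$ upper triangular. This places us exactly in the hypothesis of Lemma~\ref{lem:leftproductup} applied to the $n\times m$ matrix $\mat{A}^T$, yielding $\LTP(\mat{A}^T) = \LTP(\LTP(\mat{B}^T)\,\mat{L}^T)$.

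To return to a statement about $\mat{A}$, I would use the fact that $\LTP$ intertwines with transposition in the appropriate sense: the zero pattern $\{(i,j):i+j>n\}$ defining $\LTP$ on an $m\times n$ matrix is symmetric in the two indices, so transposing commutes with the projection onto the left-triangular region once the dimensions $m$ and $n$ are swapped accordingly. Combining this with $(\mat{X}\mat{Y})^T=\mat{Y}^T\mat{X}^T$ and transposing both sides of the previous identity then produces the desired equality $\LTP(\mat{A}) = \LTP(\mat{L}\,\LTP(\mat{B}))$.

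The main obstacle is the careful bookkeeping of dimensions when passing $\LTP$ through transposition, since the defining bound $i\leq n-j$ references the column dimension of the matrix it acts on. If this becomes awkward, a clean alternative is to bypass the transpose trick and mirror the entrywise argument used in the proof of Lemma~\ref{lem:leftproductup}: for indices with $i\leq n-j$, lower-triangularity of $\mat{L}$ gives $(\mat{L}\mat{B})_{i,j}=\sum_{k=1}^{i}\mat{L}_{i,k}\mat{B}_{k,j}$, and every $k$ in this sum satisfies $k\leq i\leq n-j$, so $\mat{B}_{k,j}=\LTP(\mat{B})_{k,j}$. Outside the left-triangular region both sides vanish by definition, which establishes the claim.
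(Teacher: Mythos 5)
Your proposal is correct, and your primary route---transposing and invoking Lemma~\ref{lem:leftproductup}---is exactly the paper's own proof, which consists of the single sentence preceding the lemma. Your worry about the bookkeeping is well founded, though: the left-triangular region of an $m\times n$ matrix is anchored at the \emph{column} dimension (entries with $i+j\leq n$ are kept), so transposition does not commute with \LTP when $m\neq n$; for a $1\times 3$ row vector, \LTP keeps two entries while \LTP of its transpose keeps none. Consequently the transpose argument literally proves the identity with the anti-diagonal anchored at $m$ rather than $n$, which coincides with the stated lemma only in the square case --- the only case in which the paper applies it, but not the generality in which the lemma is stated. Your fallback entrywise argument --- restricting the sum to $k\leq i$ by lower-triangularity of $\mat{L}$ and observing that $k+j\leq i+j\leq n$ forces $\mat{B}_{k,j}=\LTP(\mat{B})_{k,j}$, with both sides vanishing outside the region by definition of \LTP --- is complete, covers the rectangular case exactly as stated, and is the precise mirror of the paper's proof of Lemma~\ref{lem:leftproductup}; it is the cleaner justification of the two.
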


Lastly, we will extend the notion of order of quasiseparability to left triangular
matrices, in the natural way: the order of left quasiseparability is the maximal rank of any leading $k\times
(n-k)$ sub-matrix. When no confusion may occur, we will abuse the definition and
simply call it the order of quasiseparability.

\subsection{PLUQ decomposition}
We recall that for any $m\times n$ matrix $\mat{A}$ of rank $r$, there exist a
PLUQ decomposition $\mat{A}=\mat{P}\mat{L}\mat{U}\mat{Q}$ where $\mat{P}$ is an $m\times m$ permutation
matrix, $\mat{Q}$ is an $n\times n$ permutation matrix, $\mat{L}$ is an $m\times r$ unit lower
triangular matrix, and $\mat{U}$ is an $r\times n$ upper triangular matrix.
matrix. It is not unique, but once the permutation matrices $\mat{P}$ and
$\mat{Q}$ are fixed, the triangular factors $\mat{L}$ and $\mat{U}$ are unique,
since the matrix $\mat{P}^T\mat{A}\mat{Q}^T$ has generic rank profile and
therefore has a unique LU decomposition.

\subsection{The rank profile matrix}
We will use a matrix invariant, introduced in~\cite[Theorem~1]{DPS15}, that summarizes the
information on the ranks of any leading sub-matrices of a given input matrix.

\begin{definition}{\citep[Theorem~1]{DPS15}}\label{def:rpm}
  The rank profile matrix of an $m\times n$ matrix $\mat{A}$ of rank $r$ is the unique $m \times
  n$ matrix $\mathcal{R}_\mat{A}$, with only $r$ non-zero coefficients, all equal to one, located on
  distinct rows and columns such that any leading sub-matrices of $\mathcal{R}_\mat{A}$ has
  the same rank as the corresponding leading sub-matrix in $\mat{A}$.
\end{definition}

This invariant can be computed in just one Gaussian elimination of the matrix
$\mat{A}$, at the cost of $\GO{mnr^{\omega-2}}$ field operations~\citep{DPS15}, provided some conditions on the pivoting strategy being used. It is obtained
from the corresponding PLUQ decomposition as the product 
$$\RPM_\mat{A} = \mat{P}\begin{bmatrix}  \mat{I}_r\\&\mat{0}_{(m-r)\times (n-r)}\end{bmatrix}\mat{Q}.$$
%

\begin{remark}
  The notion of rank profile matrix orginates from Bruhat's matrix decomposition
  for non-singular matrices in~\cite{Bru56}, where uniqueness of this permutation was
  established. It has then been generalized in~\cite{Tyr97} and~\cite{MH07} for all matrices and
  in~\cite{Mal10} with the LEU decomposition, where it appears as the $E$
  factor.
  The connection to rank profiles introduced in~\cite{DPS15} was the motivation
  for its name.
\end{remark}

We also recall in Theorem~\ref{th:PLPT} an important property of such PLUQ
decompositions revealing the rank profile matrix.

\begin{property} [{\citep[Th.~24]{DPS16}, \citep[Th.~1]{DPS13}}] \label{th:PLPT}
Let $\mat{A}=\mat{P}\mat{L}\mat{U}\mat{Q}$, a PLUQ decomposition revealing the
rank profile matrix of $\mat{A}$.
Then, 
$\mat{P}\begin{bmatrix}  \mat{L}&\mat{0}_{m\times (m-r)}\end{bmatrix}\mat{P}^T$ 
is lower triangular and 
$\mat{Q}^T\begin{bmatrix}  \mat{U}\\\mat{0}_{(n-r)\times n}\end{bmatrix}\mat{Q}$ 
is upper triangular.
\end{property}

\begin{lemma}\label{lem:preserverpm} The rank profile matrix invariant is preserved by multiplication
  \begin{enumerate}
  \item to the left  with an invertible lower triangular matrix,
  \item to the right with an invertible upper triangular matrix.
  \end{enumerate}
\end{lemma}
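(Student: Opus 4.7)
The plan is to invoke the defining characterization of the rank profile matrix from Definition~\ref{def:rpm}, namely that $\RPM_{\mat{A}}$ is the unique matrix of the described shape whose leading submatrices share the same ranks as the corresponding leading submatrices of $\mat{A}$. Accordingly, to establish (1) it suffices to show that for every $1 \le i \le m$ and $1 \le j \le n$, the ranks of $(\mat{L}\mat{A})_{1..i,1..j}$ and of $\mat{A}_{1..i,1..j}$ coincide, and dually for (2). This reduces the lemma to a rank-preservation statement about leading submatrices.

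For (1), I would exploit the block shape forced by lower triangularity. Since $\mat{L}_{k,\ell}=0$ whenever $k<\ell$, the first $i$ rows of $\mat{L}\mat{A}$ depend only on $\mat{L}_{1..i,1..i}$ and $\mat{A}_{1..i,:}$; concretely,
\[
  (\mat{L}\mat{A})_{1..i,1..j} \;=\; \mat{L}_{1..i,1..i}\cdot \mat{A}_{1..i,1..j}.
\]
The leading principal submatrix $\mat{L}_{1..i,1..i}$ is itself lower triangular, and its determinant is the product of the first $i$ diagonal entries of $\mat{L}$, none of which vanishes since $\mat{L}$ is invertible. Hence $\mat{L}_{1..i,1..i}$ is invertible, and left multiplication by an invertible matrix preserves rank, giving the desired equality of leading-submatrix ranks.

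For (2), the dual computation applies: upper triangularity of $\mat{U}$ yields
\[
  (\mat{A}\mat{U})_{1..i,1..j} \;=\; \mat{A}_{1..i,1..j}\cdot \mat{U}_{1..j,1..j},
\]
and $\mat{U}_{1..j,1..j}$ is invertible by the same diagonal argument, so right multiplication preserves the rank of every leading submatrix. In both cases, uniqueness in Definition~\ref{def:rpm} forces $\RPM_{\mat{L}\mat{A}}=\RPM_{\mat{A}}$ and $\RPM_{\mat{A}\mat{U}}=\RPM_{\mat{A}}$.

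There is no real obstacle here: the argument is entirely local, reducing to the observation that multiplying by an invertible triangular matrix acts ``block-wise'' on every leading submatrix, together with the elementary fact that leading principal submatrices of invertible triangular matrices are themselves invertible. The only point worth flagging is that one must remain with \emph{leading} submatrices throughout; arbitrary submatrices of $\mat{L}\mat{A}$ cannot be factored this cleanly, but this is precisely why Definition~\ref{def:rpm} is phrased in terms of leading ranks only.
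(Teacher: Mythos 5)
Your proposal is correct and follows essentially the same route as the paper: both reduce the claim to $(\mat{L}\mat{A})_{1..i,1..j} = \mat{L}_{1..i,1..i}\mat{A}_{1..i,1..j}$ with $\mat{L}_{1..i,1..i}$ invertible, so every leading submatrix keeps its rank and uniqueness of the rank profile matrix gives $\RPM_{\mat{L}\mat{A}}=\RPM_{\mat{A}}$. Your write-up is merely a bit more explicit, spelling out the dual case (2) which the paper leaves to symmetry.
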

\begin{proof}
  Let $\mat{B}=\mat{L}\mat{A}$ for an invertible lower triangular matrix $\mat{L}$. Then for
  any $i\leq m, j\leq n$, $\text{rank}(\mat{B}_{1..i,1..j})
  = \text{rank}(\mat{L}_{1..i,1..i} \mat{A}_{1..i,1..j}) = \text{rank}(\mat{A}_{1..i,1..j})$ . Hence $\RPM_\mat{B} = \RPM_\mat{A}$.
\end{proof}
\section{Computing the orders of quasiseparability}\label{sec:comporders}

Let $\mat{M}$ be an $n\times n$ matrix of which one wants to determine the
quasiseparable orders $(r_L,r_U)$. Let $\mat{L}$ and $\mat{U}$ be respectively the
lower triangular part and the upper triangular part of $\mat{M}$. 

Multiplying on the left by $\mat{J}_n$, the unit anti-diagonal matrix, inverses the row order while multiplying on the right by $\mat{J}_n$ inverses
the column order.
Hence both $\mat{J}_n \mat{L}$ and
$\mat{U}\mat{J}_n$ are left triangular matrices. Remark that
conditions~\eqref{eq:quasisep:low} and~\eqref{eq:quasisep:up} state that all 
leading $k\times (n-k)$ sub-matrices of $\mat{J}_n \mat{L}$ and
$\mat{U}\mat{J}_n$ have rank no greater than $r_L$ and $r_U$ respectively.
We will then use the rank profile matrix of these two left triangular matrices
to find these parameters.

\subsection{From a rank profile matrix}

First, note that the rank profile matrix of a left triangular matrix is not necessarily
left triangular. For example, the rank profile matrix of $
\begin{smatrix}
  1 & 1 & 0\\
  1 & 0 & 0\\
  0 & 0 & 0
\end{smatrix}
$
is
$
\begin{smatrix}
  1 &0&0\\
  0 &1&0\\
  0 &0 &0
\end{smatrix}
$.
However, only the left triangular part of the rank profile matrix is sufficient to
compute the left quasiseparable orders. 

Suppose for the moment that the left-triangular part of the rank profile matrix
of a left triangular matrix is given (returned by a function LT-RPM). It remains to enumerate all leading $k\times (n-k)$
sub-matrices and find the one with the largest number of non-zero elements.
Algorithm~\ref{alg:qsrank} shows how to compute the largest rank of all leading
sub-matrices of such a matrix. Run on $\mat{J}_n\mat{L}$ and $\mat{U}\mat{J}_n$, it returns
successively the quasiseparable orders $r_L$ and $r_U$.
\begin{algorithm}[ht]
{
  \caption{QS-order} \label{alg:qsrank}
  \begin{algorithmic}
    \Require{$\mat{A}$, an $n\times n$ matrix}
    \Ensure{$\max\{\text{rank}(\mat{A}_{1..k,1..n-k}) : 1\leq k\leq n-1\}$}
    \State $\mathcal{R} \leftarrow \text{LT-RPM}(\mat{A})$\Comment{The left
      triangular part of the rank profile matrix of $\mat{A}$}
      \State rows $\leftarrow $ (False,\dots,False)
      \State cols $\leftarrow $ (False,\dots,False)
      \ForAll{$(i,j)$ such that $\mathcal{R}_{i,j}=1$}
        \State rows[i] $\leftarrow$ True
        \State cols[j] $\leftarrow$ True
      \EndFor
      \State $s,r \leftarrow 0$
      \For{$i=1\dots n-1$}
        \State \textbf{if} rows[$i$] \textbf{then} $r\leftarrow r+1$
        \State \textbf{if} cols[$n-i+1$] \textbf{then} $r\leftarrow r-1$
        \State $s\leftarrow \max(s,r)$
      \EndFor
      \State \textbf{return} $s$
  \end{algorithmic}
}
\end{algorithm}
  
This algorithm runs in $\GO{n}$ provided that the rank profile matrix $\mat{R}$
is stored in a compact way, e.g. using a vector of $r$ pairs of pivot indices ($[(i_1,j_1),\dots,(i_r,j_r)]$.

\subsection{Computing the rank profile matrix of a left triangular matrix}

We now deal with the missing component: computing the left triangular part of
the rank profile matrix of a left triangular matrix.

\subsubsection{From a PLUQ decomposition}

A first approach is to run any Gaussian elimination algorithm that can reveal
the rank profile matrix, as described in~\cite{DPS15}. In
particular, the PLUQ decomposition algorithm of~\cite{DPS13} computes the rank
profile matrix of $\mat{A}$ in $\GO{n^2r^{\omega-2}}$ where
$r=\text{rank}(\mat{A})$. 
However this estimate may be pessimistic as it does not take into account the left
triangular shape of the matrix. 
Moreover, it does not depend on the left quasiseparable order $s$
but on the rank $r$, which could be much higher.

\begin{remark}\label{rem:tradeoff}
The discrepancy between the rank $r$ of a left triangular matrix and its
quasiseparable order arises from the location of the pivots in its rank profile
matrix.
Pivots located near the top left corner of the matrix  are shared by
many leading sub-matrices, and are therefore likely to contribute to the
quasiseparable order. On the other hand, pivots near the main anti-diagonal can
be numerous, but do not add up to a large quasiseparable order.
As an illustration, consider the two following extreme cases:
\begin{enumerate}
\item a matrix $\mat{A}$ with generic rank profile. Then the leading
  $r\times r$ sub-matrix of $\mat{A}$ has rank $r$ and the 
  quasiseparable order is $s=r$.
\item  the matrix with $n-1$ ones immediately above the main anti-diagonal. It has rank
  $r=n-1$ but quasiseparable order $1$.
\end{enumerate}
\end{remark}

Remark~\ref{rem:tradeoff} indicates that in the unlucky cases when $r\gg s$,
the computation should reduce to instances of smaller sizes, hence a trade-off
should exist between, on one hand, the discrepancy between $r$ and $s$, and
on the other hand, the dimension $n$ of the problems. All contributions
presented in the remaining of the paper are based on such trade-offs.

\subsubsection{A dedicated algorithm}

In order to reach a complexity depending on $s$ and not $r$, we adapt in
Algorithm~\ref{alg:LTElim}  the tile recursive 
algorithm of~\cite{DPS13}, so that the left triangular structure of the input
matrix is preserved and can be used to reduce the amount of computation.
In this algorithm, the input matrix is modified in-place, and comments keep
track of its current value. In particular, the upper and lower
triangular factors obtained after a PLUQ decomposition are stored one above the
other on the same storage, which is represented by the notation
$\begin{bmatrix}  \mat{L}\backslash \mat{U}\end{bmatrix}$.

Algorithm~\ref{alg:LTElim} does not assume that the input matrix is
left triangular, as it will be called recursively with arbitrary matrices, but
guarantees to return the left triangular part of the rank profile matrix. 
\begin{algorithm}[ht]
{
  \begin{algorithmic}[1]
    \caption{LT-RPM: Left Triangular part of the Rank Profile Matrix} \label{alg:LTElim}
    \Require{$\mat{A}$: an $n\times n$ matrix}
    \Ensure{$\mathcal{R}$: the left triangular part of the RPM of $\mat{A}$}
    \State \textbf{if} $n=1$ \textbf{then} \textbf{return} $[0]$
    \State {Split $\mat{A} = \begin{bmatrix} \mat{A_{1}} & \mat{A_{2}}\\\mat{A_{3}} \end{bmatrix}$ 
      where $\mat{A_{3}}$ is $\lfloor \frac{n}{2} \rfloor \times \lfloor
      \frac{n}{2}\rfloor$}
    \State Compute a PLUQ decomposition $\mat{A_{1}} =
    \mat{P_1} \begin{bmatrix}\mat{L_1}\\\mat{M_1}\end{bmatrix}\begin{bmatrix}\mat{U_1}&\mat{V_1}\end{bmatrix}
    \mat{Q_1}$  revealing the RPM 
    \State $\mathcal{R}_1  \leftarrow \mat{P_1} \begin{bmatrix}
      \mat{I_{r_1}}\\&\mat{0}\end{bmatrix}\mat{Q_1}$ where $r_1=\text{rank}(\mat{A_1})$.
    \State $\begin{bmatrix} \mat{B_1}\\ \mat{B_2}\end{bmatrix}
    \leftarrow \mat{P_1}^T\mat{A_{2}}$ 
    \State $
    \begin{bmatrix}
      \mat{C_1}&\mat{C_2}
    \end{bmatrix}
    \leftarrow \mat{A_{3}}\mat{Q_1}^T$ 
    \Comment Here $\mat{A} =
    \left[\begin{array}{cc|c}
        \mat{L_1} \backslash \mat{U_1}& \mat{V_1}& \mat{B_1}\\
        \mat{M_1}               & \mat{0}  & \mat{B_2}\\
        \hline
        \mat{C_1}               & \mat{C_2}& \\
      \end{array}\right]$.
    \State $\mat{D}\leftarrow \mat{L_1}^{-1}\mat{B_1}$ 
    \State $\mat{E}\leftarrow \mat{C_1}\mat{U_1}^{-1}$ 
    \State $\mat{F}\leftarrow \mat{B_2}-\mat{M_1}\mat{D}$ 
    \State $\mat{G}\leftarrow \mat{C_2}-\mat{E}\mat{V_1}$
    \Comment Here $\mat{A}=
    \left[\begin{array}{cc|c}
        \mat{L_1} \backslash \mat{U_1}& \mat{V_1}& \mat{D}\\
        \mat{M_1}               & \mat{0}  & \mat{F}\\
        \hline
        \mat{E}               & \mat{G}& \\
      \end{array}\right]$.
    
    \State $\mat{H} \leftarrow \mat{P_1}  \begin{bmatrix} \mat{0}_{r_1 \times \frac{n}{2}} \\ \mat{F} \end{bmatrix}$
    \State $\mat{I} \leftarrow  \begin{bmatrix} \mat{0}_{r_1 \times \frac{n}{2}} & \mat{G} \end{bmatrix} \mat{Q_1}$
    \State $\mathcal{R}_2 \leftarrow \texttt{LT-RPM}(\mat{H})$
    \State $\mathcal{R}_3 \leftarrow \texttt{LT-RPM}(\mat{I})$
    \State \textbf{return} $\mathcal{R} \leftarrow  \begin{bmatrix}     \mathcal{R}_1 & \mathcal{R}_2\\ \mathcal{R}_3   \end{bmatrix}$
  \end{algorithmic}
}
\end{algorithm}
While the top left quadrant $\mat{A}_1$ is eliminated using any PLUQ decomposition algorithm
revealing the rank profile matrix, the top right and bottom left quadrants
are handled recursively.

\begin{theorem}\label{th:LTRPM}
  Given an $n\times n$ input matrix $\mat{A}$ with left quasiseparable order $s$,
  Algorithm~\ref{alg:LTElim} computes the left triangular part of the rank 
  profile matrix of $\mat{A}$ in $\GO{n^2s^{\omega-2}}$ field operations.
\end{theorem}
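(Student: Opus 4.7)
The plan is to prove the theorem in two parts: correctness, then complexity by induction on $n$.

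For correctness, I would argue that Algorithm~\ref{alg:LTElim} is the tile-recursive RPM-revealing PLUQ algorithm of~\cite{DPS13} restricted to the LT part. The PLUQ of $\mat{A_1}$ yields $\mathcal{R}_1$, the pivots of the RPM of $\mat{A}$ that fall in the top-left quadrant. The bottom-right quadrant consists of entries $(i,j)$ with $i,j>n/2$, hence $i+j>n$; any pivots there lie strictly below the anti-diagonal and do not appear in the LT part, which is why the algorithm correctly omits that block from its output. The remaining LT pivots lie in the top-right and bottom-left quadrants and appear, after elimination by $\mat{A_1}$'s pivots, as pivots of the Schur-complement blocks $\mat{H}$ and $\mat{I}$; zeroing the pivot rows (resp.\ columns) of $\mat{A_1}$ in $\mat{H}$ (resp.\ $\mat{I}$) removes the pivots already accounted for by $\mathcal{R}_1$. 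Recursion returns the LT parts $\mathcal{R}_2,\mathcal{R}_3$ of the RPMs of $\mat{H},\mat{I}$, which assemble with $\mathcal{R}_1$ into the LT part of the RPM of $\mat{A}$.

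For the complexity, let $T(n)$ denote the cost on an $n\times n$ input of left quasiseparable order at most $s$; I would prove $T(n)=O(n^2 s^{\omega-2})$ by induction on $n$. Since $\mat{A_1}=\mat{A}_{1..n/2,1..n/2}$ is the leading $k\times(n-k)$ sub-matrix of $\mat{A}$ for $k=n/2$, its rank $r_1$ satisfies $r_1\le s$. The non-recursive work comprises: the RPM-revealing PLUQ of $\mat{A_1}$ at cost $O((n/2)^2 r_1^{\omega-2})$; the two triangular solves yielding $\mat{D}$ and $\mat{E}$, each at cost $O(n r_1^{\omega-1})$; and the two matrix products yielding $\mat{F}$ and $\mat{G}$, each at cost $O((n/2)^2 r_1^{\omega-2})$. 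Since $r_1\le s\le n$, every such term is $O(n^2 s^{\omega-2})$.

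The crucial step is to show that the recursive inputs $\mat{H}$ and $\mat{I}$ also carry left quasiseparable order at most $s$, so the inductive hypothesis applies at size $n/2$. For $\mat{H}$ I would establish the identity
\[
\text{rank}(\mat{H}_{1..k,\,1..n/2-k})=\text{rank}(\mat{A}_{1..k,\,1..n-k})-\text{rank}(\mat{A}_{1..k,\,1..n/2}),
\]
whose right hand side is at most $s$ by the quasiseparable hypothesis on $\mat{A}$. The identity follows from Schur-complement rank additivity: since the PLUQ of $\mat{A_1}$ reveals its RPM, the subset of its pivots lying in rows $1..k$ coincides with the pivots of the RPM of $\mat{A}_{1..k,1..n/2}$; by Theorem~\ref{th:PLPT} the corresponding elimination factor is lower triangular after $\mat{P_1}$-conjugation, so its restriction to rows $1..k$ depends only on those pivots; applying this restricted elimination to $\mat{A}_{1..k,1..n-k}$ leaves the non-pivot rows of its right sub-block equal to the non-zero rows of $\mat{H}_{1..k,1..n/2-k}$, whose rank is therefore the displayed difference. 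A symmetric argument (transposing and using Lemma~\ref{lem:leftproductlow}) handles $\mat{I}$.

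The recurrence $T(n)\le 2T(n/2)+O(n^2 s^{\omega-2})$ then solves to $T(n)=O(n^2 s^{\omega-2})$, completing the proof. The main obstacle is the key lemma: it ties together the RPM-revealing property of the PLUQ of $\mat{A_1}$, the triangularity of the elimination factor (Theorem~\ref{th:PLPT}), and Schur-complement rank additivity to transport the quasiseparable bound through the recursion; without it the sub-call on $\mat{H}$ could cost as much as $O((n/2)^2 r^{\omega-2})$ for some $r$ not controlled by $s$. The remaining ingredients---the leading-sub-matrix bound $r_1\le s$ and standard cost estimates for PLUQ, triangular solve, and rectangular matrix multiplication---are routine.
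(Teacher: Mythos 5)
Your proposal is correct and follows essentially the same route as the paper: eliminate $\mat{A_1}$ with an RPM-revealing PLUQ, invoke Theorem~\ref{th:PLPT} so the elimination factor is triangular and leading-submatrix ranks are preserved, recurse on $\mat{H}$ and $\mat{I}$, and close the recurrence $T(n)\le 2T(n/2)+O(n^2s^{\omega-2})$ by showing the recursive inputs keep left quasiseparable order at most $s$ (with $r_1\le s$ for the top-left block). The only difference is in detail: where you establish the exact Schur-complement identity $\mathrm{rank}(\mat{H}_{1..k,1..n/2-k})=\mathrm{rank}(\mat{A}_{1..k,1..n-k})-\mathrm{rank}(\mat{A}_{1..k,1..n/2})$, the paper reaches the same bound more cheaply by observing that $\mat{H}$ is $\mat{L}\mat{A_2}$ with some rows zeroed out, where $\mat{L}$ is invertible lower triangular, so its order is at most that of $\mat{A_2}\le s$.
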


\begin{proof}
First remark that 
$$
\mat{P_1} \begin{bmatrix}  \mat{D}\\\mat{F}\end{bmatrix} = \underbrace{\mat{P_1}
\begin{bmatrix} \mat{L_1}^{-1}\\ -\mat{M_1}\mat{L_1}^{-1} & \mat{I_{n-r_1}} \end{bmatrix} \mat{P_1}^T}_{\mat{L}} \mat{P_1}
\begin{bmatrix}   \mat{B_1}\\\mat{B_2} \end{bmatrix} = \mat{L}\mat{A}_2.
$$
Hence
$$
\mat{L} \begin{bmatrix}  \mat{A_1} & \mat{A_2}\end{bmatrix} =
\mat{P_1}
\left[\begin{array}{c|c}
  \begin{bmatrix}\mat{U_1} & \mat{V_1}\end{bmatrix}\mat{Q_1} & \mat{D} \\
   \mat{0}   & \mat{F}
\end{array}
\right].
$$
From Theorem~\ref{th:PLPT}, the matrix $\mat{L}$ is lower triangular and by Lemma~\ref{lem:preserverpm}
the rank profile matrix of $\begin{bmatrix}  \mat{A_1} & \mat{A_2}\end{bmatrix}$ equals  that of $\mat{P_1}
\left[\begin{array}{c|c}
  \begin{bmatrix}\mat{U_1} & \mat{V_1}\end{bmatrix}\mat{Q_1} & \mat{D} \\
   \mat{0}   & \mat{F}
\end{array}
\right]$.
Now as $\mat{U_1}$ is upper triangular and non-singular, this rank profile matrix is
in turn that of
$\mat{P_1}
\left[\begin{array}{c|c}
  \begin{bmatrix}\mat{U_1} & \mat{V_1}\end{bmatrix}\mat{Q_1} & \mat{0} \\
   \mat{0}   & \mat{F}
\end{array}
\right]$ and its left triangular part is $\begin{bmatrix}  \mathcal{R}_1 & \mathcal{R}_2\end{bmatrix}$.

By a similar reasoning, $\begin{bmatrix}  \mathcal{R}_1& \mathcal{R}_3\end{bmatrix}^T$
is the left triangular part of the rank profile matrix of $\begin{bmatrix}  \mat{A_1}& \mat{A_3}\end{bmatrix}^T$, which shows that the algorithm is correct.

Let $s_1$ be the left quasiseparable order of $\mat{H}$ and $s_2$ that of $\mat{I}$.
The number of field operations required to run Algorithm~\ref{alg:LTElim} is
$$
T(n,s)=\alpha
r_1^{\omega-2}n^2+T_{\text{LT-RPM}}(n/2,s_1)+T_{\text{LT-RPM}}(n/2,s_2)
$$
 for a
positive constant $\alpha$.
We will prove by induction that $T(n,s)\leq 2\alpha s^{\omega-2}n^2$.

Again, since $\mat{L}$ is lower triangular, the rank profile matrix of
$\mat{L}\mat{A_2}$ is that of $\mat{A}_2$ and the quasiseparable orders of
the two matrices are the same. Now $\mat{H}$ is the matrix $\mat{L}\mat{A_2}$
with some rows zeroed out, hence $s_1$, the
quasiseparable order of $\mat{H}$ is no greater than that of $\mat{A_2}$ which 
is less or equal to $s$.
Hence $\max(r_1,s_1,s_2) \leq s $ and we obtain $T(n,s) \leq  \alpha
s^{\omega-2}n^2 + 4 \alpha s^{\omega-2}(n/2)^2 = 2\alpha s^{\omega-2}n^2$.
\end{proof}

\section{New structured representations for quasiseparable matrices}
\label{sec:generators}

In order to introduce fast matrix arithmetic in the algorithms computing with
quasiseparable matrices, we introduce in this section three new structured
representations: the Recursive Rank Revealing (RRR) representation, the Bruhat
representation, and finally its compact version, the Compact Bruhat (CB) representation.

%

\subsection{The Recursive Rank Revealing representation}

This a simplified version of the HSS representation. It uses in the same manner
a recursive splitting of the matrix in a quad-tree, and each off-diagonal block at
each recursive level is represented by a rank revealing factorization.

\begin{definition}
  [RR: Rank revealing factorization]
  A rank revealing factorization (RR) of an $m\times n$ matrix $\mat{A}$
    of rank $r$ is a pair of matrices $\mat{L}$ and $\mat{R}$ of dimensions $m\times r$ and
$r\times n$ respectively, such that $\mat{A}=\mat{L}\mat{R}$.
\end{definition}
For instance, a PLUQ decomposition is a rank revealing factorization. One can
either store explicitely the two factors  $\mat{P}\mat{L}$ and $\mat{U}\mat{Q}$
or only consider the factors $\mat{L}$ and $\mat{U}$ keeping in mind that
permutations need to be applied on the left and on the right of the product.

\begin{definition}[RRR: Recursive Rank Revealing representation]
  A recursive rank revealing  (RRR) representation of an $n\times n$ quasiseparable matrix 
$ \mat{A}=\begin{bmatrix}\mat{A}_{11}&\mat{A}_{12}\\ \mat{A}_{21}&\mat{A}_{22}\end{bmatrix}$ 
of order $s$ is formed by a rank revealing factorization of $\mat{A}_{12}$ and
$\mat{A}_{21}$ and applies recursively for the representation of $\mat{A}_{11}$ and
$\mat{A}_{22}$.
\end{definition}

A Recursive Rank Revealing representation forms a binary tree where each node
correspond to a diagonal block of the input matrix, and contains the Rank
Revealing factorization of its off-diagonal quadrants.

If $\mat{A}$ is $(r_L,r_U)$-quasiseparable, then all off-diagonal blocks in its
lower part have rank bounded by $r_L$, and their rank revealing factorizations
take advantage of this low rank until a block dimension $n/2^k \approx r_L$
where a dense representation is used. The same applies for the upper triangular part
with quasiseparable order $r_U$.
This representation uses $O(sn\log \frac{n}{s})$ space where $s=\max(r_L, r_U)$.

\subsection{The Bruhat representation}

This structured representation is closely related to the notion of the rank
profile matrix and the LEU decomposition of~\cite{Mal10}.
Contrarily to the RRR or the HSS representations, it is not depending on a specific recursive cutting of the matrix.
For this representation, and its compact version that will be studied in section~\ref{sec:CB:def}, the
lower and the upper triangular parts are represented 
independently. We will therefore  treat them in a unified way, showing how to
represent a left triangular matrix. Recall that if $\mat{L}$ is lower triangular
and $\mat{U}$ is upper triangular then both $\mat{J_n}\mat{L}$ and $\mat{U}\mat{J_n}$ are left triangular.

Given a left triangular matrix $\mat{A}$ of quasiseparable order $s$ and a PLUQ decomposition of it,
revealing its rank profile matrix $\mat{R}$, the Bruhat generator consists in the three matrices 
\begin{eqnarray}
  \label{eq:storage}
 \mathcal{L} &=&\LTP(\mat{P}\begin{bmatrix}\mat{L}&\mat{0}\end{bmatrix}\mat{Q}) \label{eq:storage:L},\\
 \mathcal{R} &=& \LTP(\mat{R}),\\
 \mathcal{U} &=& \LTP(\mat{P}\begin{bmatrix}    \mat{U}\\ \mat{0}  \end{bmatrix} \mat{Q})\label{eq:storage:U}.
\end{eqnarray}

Lemma~\ref{lem:LEUstorage} shows that these three matrices suffice to recover
the initial left triangular matrix.
\begin{lemma}\label{lem:LEUstorage}
$\mat{A}= \LTP( \mathcal{L}\mathcal{R}^T \mathcal{U})$
\end{lemma}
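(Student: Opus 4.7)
The plan is to reduce the claim to the exact factorization $\mat{A}=\bar{\mat{L}}\mat{R}^T\bar{\mat{U}}$, where $\bar{\mat{L}}:=\mat{P}\begin{bmatrix}\mat{L}&\mat{0}\end{bmatrix}\mat{Q}$ and $\bar{\mat{U}}:=\mat{P}\begin{bmatrix}\mat{U}\\\mat{0}\end{bmatrix}\mat{Q}$, and then peel the bars and the transpose inside an outer $\LTP$ via the two lemmas of Section~\ref{sec:prelim}. First I would verify the identity $\bar{\mat{L}}\mat{R}^T\bar{\mat{U}}=\mat{A}$: substituting $\mat{R}^T=\mat{Q}^T\begin{bmatrix}\mat{I}_r\\&\mat{0}\end{bmatrix}\mat{P}^T$ makes adjacent permutations cancel, and the central idempotent is absorbed by the zero padding of $\begin{bmatrix}\mat{L}&\mat{0}\end{bmatrix}$ and $\begin{bmatrix}\mat{U}\\\mat{0}\end{bmatrix}$, leaving $\mat{P}\mat{L}\mat{U}\mat{Q}=\mat{A}$. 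Theorem~\ref{th:PLPT} then guarantees that the partial products $\mat{L}':=\bar{\mat{L}}\mat{R}^T=\mat{P}\begin{bmatrix}\mat{L}&\mat{0}\end{bmatrix}\mat{P}^T$ and $\mat{U}':=\mat{R}^T\bar{\mat{U}}=\mat{Q}^T\begin{bmatrix}\mat{U}\\\mat{0}\end{bmatrix}\mat{Q}$ are respectively lower and upper triangular.

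Since $\mat{A}$ is left triangular we have $\LTP(\mat{A})=\mat{A}$, and Lemma~\ref{lem:leftproductup} applied to $\bar{\mat{L}}\cdot\mat{U}'$ yields $\mat{A}=\LTP(\LTP(\bar{\mat{L}})\mat{U}')=\LTP(\mathcal{L}\mat{R}^T\bar{\mat{U}})$. To finish with a dual application of Lemma~\ref{lem:leftproductlow} I will need two facts: that $\mathcal{L}\mathcal{R}^T$ is lower triangular, and that $\mathcal{L}\mat{R}^T$ already coincides with $\mathcal{L}\mathcal{R}^T$. Both rest on the same structural observation, which I expect to be the main obstacle. Writing $\bar{\mat{L}}=\mat{L}'(\mat{P}\mat{Q})$, the $k$-th column of $\bar{\mat{L}}$ equals the $b$-th column of $\mat{L}'$, where $b$ is the unique index with $(\mat{P}\mat{Q})_{b,k}=1$; and since $\mat{R}=\mat{P}\begin{bmatrix}\mat{I}_r\\&\mat{0}\end{bmatrix}\mat{Q}$, this $b$ is precisely the row of the pivot of $\mat{R}$ lying in column $k$, whenever such a pivot exists. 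Lower triangularity of $\mat{L}'$ then forces $\bar{\mat{L}}_{a,k}=0$ for every $a<b$.

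From this observation both required facts drop out. If $(b,k)$ is a pivot strictly below the antidiagonal (so $b+k>n$) and $a+k\le n$ (so $\mathcal{L}_{a,k}=\bar{\mat{L}}_{a,k}$), then $a<b$ and hence $\mathcal{L}_{a,k}=0$; summing over all such pivots gives $\mathcal{L}(\mat{R}-\mathcal{R})^T=0$, i.e.\ $\mathcal{L}\mat{R}^T=\mathcal{L}\mathcal{R}^T$. A nonzero entry $(\mathcal{L}\mat{R}^T)_{a,b}$ requires a pivot $(b,k)$ with $\mathcal{L}_{a,k}\neq 0$, which forces $a\ge b$, so $\mathcal{L}\mathcal{R}^T=\mathcal{L}\mat{R}^T$ is lower triangular. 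Lemma~\ref{lem:leftproductlow} now applies to $\mathcal{L}\mathcal{R}^T\cdot\bar{\mat{U}}$, yielding $\LTP(\mathcal{L}\mathcal{R}^T\bar{\mat{U}})=\LTP(\mathcal{L}\mathcal{R}^T\mathcal{U})$; chaining with the earlier equalities concludes $\mat{A}=\LTP(\mathcal{L}\mathcal{R}^T\mathcal{U})$.
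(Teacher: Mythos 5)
Your proof is correct and follows essentially the same route as the paper's: reduce to the factorization $\mat{P}\begin{bmatrix}\mat{L}&\mat{0}\end{bmatrix}\mat{Q}\cdot\mat{R}^T\cdot\mat{P}\begin{bmatrix}\mat{U}\\\mat{0}\end{bmatrix}\mat{Q}$, use Theorem~\ref{th:PLPT} with Lemmas~\ref{lem:leftproductup} and~\ref{lem:leftproductlow} to replace the outer factors by $\mathcal{L}$ and $\mathcal{U}$, and observe that columns of $\mathcal{L}$ corresponding to pivots of $\mat{R}$ below the anti-diagonal vanish, so $\mat{R}^T$ may be replaced by $\mathcal{R}^T$. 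The only difference is the order of the last two replacements and your more explicit index bookkeeping, which does not change the substance of the argument.
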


\begin{proof}
 $   \mat{A} =  \mat{P}\begin{bmatrix}  \mat{L} & \mat{0}_{m\times (n-r)}\end{bmatrix} \mat{Q}  \mat{Q}^T \begin{bmatrix}  \mat{U}\\
  \mat{0}_{(n-r)\times n}\end{bmatrix} \mat{Q}.
$
From Theorem~\ref{th:PLPT}, the matrix $\mat{Q}^T\begin{bmatrix}  \mat{U}\\  \mat{0}\end{bmatrix} \mat{Q}$
is upper triangular and the matrix $\mat{P}
\begin{bmatrix}  \mat{L}&\mat{0}\end{bmatrix}\mat{P}^T$ is lower triangular.
Applying Lemma~\ref{lem:leftproductup} yields
$
\mat{A}=\LTP(\mat{A}) = \LTP(\mathcal{L}\mat{Q}^T\begin{bmatrix}  \mat{U}\\
  \mat{0}\end{bmatrix} \mat{Q} ) = \LTP(\mathcal{L}\mat{R}^T \mat{P}\begin{bmatrix}  \mat{U}\\  \mat{0}\end{bmatrix} \mat{Q} ),
$
where $\mat{R}=\mat{P}\begin{smatrix}  \mat{I}_r\\&0\end{smatrix}\mat{Q}$.
Then, as $\mathcal{L}\mat{R}^T$ is the matrix $\mat{P}
\begin{bmatrix}  \mat{L}&\mat{0}\end{bmatrix}\mat{P}^T$ with some coefficients
zeroed out, it is lower triangular, hence applying again
Lemma~\ref{lem:leftproductlow} yields
\begin{equation}\label{eq:ALEU}
\mat{A}= \LTP(\mathcal{L}\mat{R}^T \mathcal{U}).
\end{equation}
Consider any non-zero coefficient $e_{j,i}$ of $\mat{R}^T$ that is not in its the left
triangular part, i.e. $j>n-i$. Its contribution to the product
$\mathcal{L}\mat{R}^T$, is only of the form $\mathcal{L}_{k,j}e_{j,i}$. However
the leading coefficient in column $j$ of $\mat{P}\begin{bmatrix}
  \mat{L}&\mat{0}\end{bmatrix}\mat{Q}$ is precisely at position $(i,j)$. Since
$i>n-j$, this means that the $j$-th column of $\mathcal{L}$ is all zero, and
therefore $e_{i,j}$ has no contribution to the product.
Hence we finally have
 $\mat{A} = \LTP(\mathcal{L}\mathcal{R}^T\mathcal{U})$.
\end{proof}


We now analyze the space required by this generator.
\begin{lemma}\label{lem:size}
  Consider an $n\times n$ left triangular rank profile matrix $\mat{R}$ with quasiseparable order $s$. Then a
  left triangular matrix $\mat{L}$ all zero except at the positions of the pivots of
  $\mat{R}$ and below these pivots, does not contain more than $s(n-s)$ non-zero
  coefficients. 
\end{lemma}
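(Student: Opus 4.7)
The plan is to count the possible nonzero positions of $\mat{L}$ column by column, swap the order of summation to introduce the leading blocks of $\mat{R}$, and then invoke the quasiseparable order together with two trivial shape-induced bounds.

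First I would describe the nonzero pattern of $\mat{L}$. Let $(i_1,j_1),\dots,(i_r,j_r)$ be the pivots of $\mat{R}$, so they lie on distinct rows and columns and satisfy $i_\ell+j_\ell\leq n$ because $\mat{R}$ is left triangular. In column $j_\ell$, the matrix $\mat{L}$ may have nonzero entries only at the pivot and below it, but left-triangularity of $\mat{L}$ caps the row index at $n-j_\ell$; so column $j_\ell$ contributes at most $d_\ell := n-i_\ell-j_\ell+1$ nonzero coefficients. Columns of $\mat{L}$ that contain no pivot of $\mat{R}$ are entirely zero. Hence the number of nonzero coefficients of $\mat{L}$ is at most $\sum_{\ell=1}^r d_\ell$.

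Next I would switch the summation. For $k\in\{1,\dots,n-1\}$ define $N_k$ as the number of pivots of $\mat{R}$ lying in the leading sub-block $\mat{R}_{1..k,1..n-k}$. A pivot $(i_\ell,j_\ell)$ is in that sub-block iff $i_\ell\leq k\leq n-j_\ell$, so the number of admissible $k$ is exactly $d_\ell$. Double counting then gives
$$\sum_{\ell=1}^r d_\ell \;=\; \sum_{k=1}^{n-1} N_k.$$
Three independent upper bounds on $N_k$ are now available: $N_k\leq s$ by the quasiseparable order assumption, $N_k\leq k$ because the sub-block has only $k$ rows, and $N_k\leq n-k$ because it has only $n-k$ columns.

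Finally I would evaluate $\sum_{k=1}^{n-1}\min(s,k,n-k)$. The constraint $N_{\lfloor n/2\rfloor}\leq\lfloor n/2\rfloor$ forces $s\leq\lfloor n/2\rfloor$, so the min equals $k$ for $k<s$, equals $s$ for $s\leq k\leq n-s$, and equals $n-k$ for $k>n-s$. Splitting the sum accordingly yields
$$\sum_{k=1}^{s-1} k \;+\; \sum_{k=s}^{n-s} s \;+\; \sum_{k=n-s+1}^{n-1}(n-k) \;=\; \tfrac{s(s-1)}{2}+s(n-2s+1)+\tfrac{s(s-1)}{2} \;=\; s(n-s),$$
which is the claimed bound. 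No step is truly difficult; the only nonobvious move is to combine the $N_k\leq s$ bound with the shape-induced bounds $N_k\leq k$ and $N_k\leq n-k$, which is exactly what replaces the weaker estimate $s(n-1)$ obtained from the quasiseparable bound alone by the sharp value $s(n-s)$.
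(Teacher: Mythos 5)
Your proposal is correct and follows essentially the same route as the paper: your double-counting step turns the column-by-column count into $\sum_{k=1}^{n-1} N_k$, which is exactly the quantity $\sum_{k=1}^{n-1}\operatorname{rank}(\mat{R}_{1..k,1..n-k})$ that the paper bounds directly row by row. Both arguments then finish with the identical estimate $N_k\le\min(s,k,n-k)$, whose sum evaluates to $s(n-s)$.
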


\begin{proof}
  Let $p(k)=\text{rank}(\mat{R}_{1..k,1..n-k})$. The value $p(k)$ indicates the
  number of non zero columns located in the $k\times n-k$ leading sub-matrix of
  $\mat{L}$ and is therefore an upper bound on the number of non-zero elements
  in row $k$ of $\mat{L}$. Consequently the sum $\sum_{k=1}^{n-1} p(k)$ is an upper bound on the
  number of non-zero coefficients in $\mat{L}$.
  Since $p(k)\leq s$, it is bounded by $sn$. More precisely, there is no more
  than $k$ pivots in the first $k$ columns and the first $k$ rows, hence
  $p(k)\leq k$  and $p(n-k)\leq k$ for $k\leq s$. The bound becomes $s(s+1)
  +(n-2s-1)s = s(n-s)$.
\end{proof}

\begin{corollary}
The Bruhat generator $\left(\mathcal{L},\mathcal{R},\mathcal{U}\right)$
uses $2s(n-s)$ field coefficients and $\GO{n}$ additional indices to represent a
left triangular matrix.
\end{corollary}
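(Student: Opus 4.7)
The plan is to apply Lemma~\ref{lem:size} twice, once to $\mathcal{L}$ and once to $\mathcal{U}$, with the common rank profile matrix being $\mathcal{R}$ (and its transpose). For $\mathcal{L}$, the first task is to verify the support hypothesis of Lemma~\ref{lem:size}. By Theorem~\ref{th:PLPT}, the matrix $\mat{P}\begin{bmatrix}\mat{L}&\mat{0}\end{bmatrix}\mat{P}^T$ is lower triangular with unit diagonal aligned on the pivot rows of $\mat{R}=\mat{P}\begin{bmatrix}\mat{I}_r\\&\mat{0}\end{bmatrix}\mat{Q}$, so each column of $\mat{P}\begin{bmatrix}\mat{L}&\mat{0}\end{bmatrix}$ either vanishes identically or has its leading nonzero exactly at the corresponding pivot row. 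Right-multiplication by $\mat{Q}$ then relabels the nonzero columns so that they land on the pivot columns of $\mat{R}$, preserving the condition that nonzeros lie at or below the pivot. Taking the left triangular part does not enlarge this support, so $\mathcal{L}$ falls under the hypothesis of Lemma~\ref{lem:size} and contributes at most $s(n-s)$ nonzero coefficients.

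A symmetric argument handles $\mathcal{U}$. Theorem~\ref{th:PLPT} provides that $\mat{Q}^T\begin{bmatrix}\mat{U}\\\mat{0}\end{bmatrix}\mat{Q}$ is upper triangular with unit diagonal on the pivot columns of $\mat{R}$, so the nonzero entries of $\mathcal{U}$ sit in pivot rows of $\mathcal{R}$ at or to the right of the pivot. Transposing brings the situation under Lemma~\ref{lem:size} applied to $\mathcal{U}^T$ relative to $\mathcal{R}^T$; since the quasiseparable order of a left triangular matrix coincides with that of its transpose (the leading $k\times(n-k)$ submatrices of $\mathcal{R}$ and of $\mathcal{R}^T$ carry the same family of ranks up to indexing), the lemma yields another $s(n-s)$ nonzero coefficients. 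Adding the two contributions gives the announced $2s(n-s)$ field entries.

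The remaining $O(n)$ overhead is for storing $\mathcal{R}$ itself. Because $\mathcal{R}$ is a $0/1$ matrix with at most one nonzero per row and per column, it can be encoded as a list of at most $n$ pivot coordinates, costing $O(n)$ indices. This same list simultaneously delimits the nonzero columns of $\mathcal{L}$ and the nonzero rows of $\mathcal{U}$, so no additional positional data is required. I expect the only delicate step to be the bookkeeping that turns the triangularity guaranteed by Theorem~\ref{th:PLPT} into the precise column-support (respectively row-support) condition demanded by Lemma~\ref{lem:size}; once that translation is in place both counts are immediate.
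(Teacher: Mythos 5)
Your proposal is correct and follows essentially the same route as the paper: both apply Lemma~\ref{lem:size} to $\mathcal{L}$ (and, by the symmetric/transposed argument, to $\mathcal{U}$), using Theorem~\ref{th:PLPT} to place the leading entries at the pivot positions of $\mathcal{R}$, and charge $\GO{n}$ indices to the compact storage of the pivot list. The only nitpick is your claim of a \emph{unit} diagonal for the $\mat{U}$ factor (the paper's convention makes only $\mat{L}$ unit triangular), but this is immaterial since the counting argument uses only the support, not the values, of the pivot entries.
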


\begin{proof}
The leading column elements of  $\mathcal{L}$ are located at
the pivot positions of the left triangular rank profile matrix $\mathcal{R}$.
Lemma~\ref{lem:size} can therefore be applied to show that this matrix occupies
no more than $s(n-s)$ non-zero coefficients.
The same argument applies to the matrix $\mathcal{U}$.
\end{proof}

Figure~\ref{fig:bruhatstorage} illustrates this generator on  a left triangular
matrix of quasiseparable order~$5$. 
\begin{figure}[htbp]
\begin{center}
   \includegraphics[height=\myfigsize]{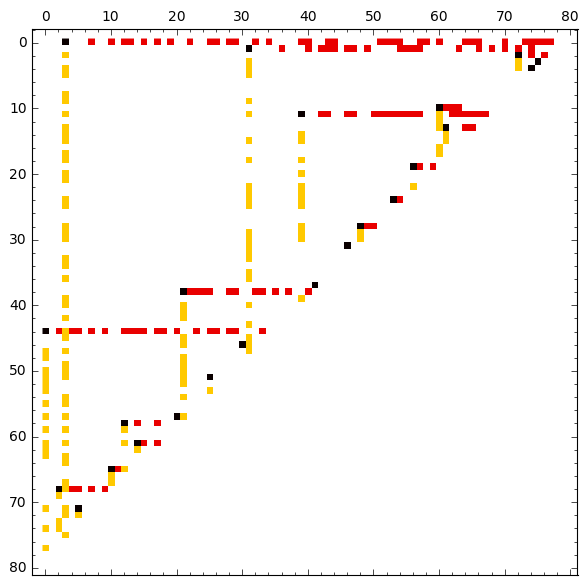}\\
\end{center}
\caption{Support of the $\mathcal{L}$ (yellow), $\mathcal{R}$ (black) and
  $\mathcal{U}$ (red) matrices of the Bruhat generator for a $80\times 80$ left triangular
  matrix of quasiseparable order $5$.} \label{fig:bruhatstorage}
\end{figure}
As the supports of $\mathcal{L}$ and
$\mathcal{U}$ are disjoint, the two matrices can be shown on the same
left triangular matrix. The pivots of $\mathcal{R}$ (black) are the leading
coefficients of every non-zero row of $\mathcal{U}$ and non-zero column of $\mathcal{L}$.

\begin{corollary}
  Any $(r_L,r_U)$-quasiseparable matrix of dimension $n \times n$ can be
  represented by a generator using no more than $2n(r_L+r_U)+n-2(r_L^2 -2r_U^2)$ field elements and $2(r_L+r_U)$ indices.
\end{corollary}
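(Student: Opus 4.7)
The plan is to reduce the statement to two applications of the preceding corollary for left triangular matrices, one for the strictly lower part and one for the strictly upper part of $\mat{M}$, and then to account separately for the diagonal and for the pivot indices.

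First I would decompose $\mat{M}$ as $\mat{M} = \mat{L} + \mat{D} + \mat{U}$, where $\mat{L}$ is the strictly lower triangular part, $\mat{D}$ is the diagonal, and $\mat{U}$ is the strictly upper triangular part. The diagonal contributes exactly $n$ field coefficients and no indices. As observed just before Section~\ref{sec:comporders} (and used already in Algorithm~\ref{alg:qsrank}), multiplying by $\mat{J}_n$ on the appropriate side turns each triangular part into a left triangular matrix: $\mat{J}_n \mat{L}$ is left triangular, and by condition~\eqref{eq:quasisep:low} all its leading $k\times(n-k)$ submatrices have rank at most $r_L$, so its (left) quasiseparable order is at most $r_L$. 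Symmetrically $\mat{U}\mat{J}_n$ is left triangular with quasiseparable order at most $r_U$.

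Next I would invoke the previous corollary (which gives the bound $2s(n-s)$ field elements plus $\GO{n}$ indices for a left triangular matrix of quasiseparable order $s$, stored via its Bruhat generator $(\mathcal{L},\mathcal{R},\mathcal{U})$) on each of these two left triangular matrices. Applied to $\mat{J}_n \mat{L}$ with $s = r_L$ this gives $2r_L(n-r_L)$ field coefficients; applied to $\mat{U}\mat{J}_n$ with $s = r_U$ this gives $2r_U(n-r_U)$ field coefficients. Summing the two contributions together with the $n$ diagonal entries yields a total of
\[
2r_L(n-r_L) + 2r_U(n-r_U) + n = 2n(r_L+r_U) + n - 2\bigl(r_L^2 + r_U^2\bigr)
\]
field elements, matching the claimed count (up to the displayed formula in the statement). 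For the index cost, I would observe that the Bruhat generator of a left triangular matrix only needs to record the pivot positions of its rank profile matrix $\mathcal{R}$; there are at most $r_L$ such pivots for the lower part and at most $r_U$ for the upper part, and each pivot consists of a (row, column) pair, for a total of $2(r_L+r_U)$ indices.

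The routine calculations are the arithmetic above and the verification that recovering $\mat{M}$ from the two flipped Bruhat generators is just an application of Lemma~\ref{lem:LEUstorage} followed by multiplication by $\mat{J}_n$, so no real obstacle arises. The only mildly delicate point is to confirm that when we flip by $\mat{J}_n$ the quasiseparable order is preserved (immediate from the rank definition, since left/right multiplication by $\mat{J}_n$ is a row/column reversal and does not change ranks of the relevant submatrices), and that the Bruhat generator of a left triangular matrix of order $s$ really fits the support described by Lemma~\ref{lem:size}, which has already been established. Hence the bound follows directly.
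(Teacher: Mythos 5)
Your decomposition and your field-element accounting are exactly the paper's: its proof likewise obtains the count as the Bruhat-generator sizes $2r_L(n-r_L)+2r_U(n-r_U)$ of the two triangular parts (flipped by $\mat{J}_n$ into left triangular matrices of orders $r_L$ and $r_U$), plus $n$ coefficients for the diagonal, and the mismatch with the displayed $-2(r_L^2-2r_U^2)$ is indeed a typo for $-2(r_L^2+r_U^2)$, which your arithmetic recovers. The reduction itself (flipping preserves the relevant ranks, reconstruction via Lemma~\ref{lem:LEUstorage}) is unproblematic and matches the paper.

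The one genuine flaw is your justification of the index count. You assert that the rank profile matrix of $\mat{J}_n\mat{L}$ has at most $r_L$ pivots. That is false: the number of pivots is the \emph{rank} of that left triangular matrix, which can vastly exceed its quasiseparable order --- the paper's own Remark~\ref{rem:tradeoff} exhibits the matrix with $n-1$ ones just above the anti-diagonal, of rank $n-1$ but order $1$. This is precisely why the preceding corollary budgets \GO{n} additional indices for a single left triangular matrix, not $\GO{s}$. The paper's proof of the present corollary merely asserts that the $2(r_L+r_U)$ indices store the pivot positions of the two rank profile matrices, without the bound you introduce; read strictly, that figure should be \GO{n} indices (consistent with the previous corollary), and your attempt to derive $2(r_L+r_U)$ by bounding the number of pivots by the order cannot be repaired. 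So: same approach as the paper throughout, correct on field elements, but the step ``at most $r_L$ pivots for the lower part'' is wrong and should be replaced by the \GO{n}-index accounting of the preceding corollary.
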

\begin{proof}
  This estimate is obtained as the space required for the Bruhat
  representation of the upper and lower triangular parts of the matrix, with 
   $n$ coefficients for the main diagonal. The $2(r_L+r_U)$ indices correspond
  to the storge of the pivot positions of the two rank profile matrices.
\end{proof}

\subsection{The compact Bruhat representation}
\label{sec:CB:def}

The scattered structure of the Bruhat generator makes it not amenable to the use
of fast matrix arithmetic. 
We therefore propose here a compact variation on it, called the compact Bruhat,
that will be used to derive algorithms taking advantage of fast matrix
multiplication.
This structured representation relies on the 
generalized Bruhat decomposition described in~\cite{MH07}, thanks to the
connection with the rank profile matrix made in~\cite{DPS16}.

\begin{theorem}[Generalized Bruhat decomposition~\citep{MH07,DPS16}] \label{th:genbruhat}
  For any $m\times n$ matrix $\mat{A}$ of rank $r$, there exist an $m\times r$
  matrix $\mat{C}$ in column echelon form, an $r\times n$ matrix $\mat{E}$ in row echelon
  form, and an $r\times r$ permutation matrix $\mat{R}$ such that
  $\mat{A}=\mat{C}\mat{R}\mat{E}$.
\end{theorem}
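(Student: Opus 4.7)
The plan is to reduce the statement to the PLUQ decomposition that has already been recalled in the preliminaries, and then to rearrange the factors so that the two outer matrices are in echelon form, at the cost of introducing a single $r\times r$ permutation in the middle.

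Concretely, I would start with any PLUQ decomposition $\mat{A}=\mat{P}\mat{L}\mat{U}\mat{Q}$ where $\mat{L}$ is $m\times r$ unit lower triangular and $\mat{U}$ is $r\times n$ upper triangular. The matrix $\mat{L}$ is already in column echelon form in the strong sense that its $i$-th column has its leading non-zero entry at row $i$; similarly $\mat{U}$ is in row echelon form. The obstacle is that $\mat{P}\mat{L}$ need not be in column echelon form: its $i$-th column has its leading entry at row $\pi(i)$, where $\pi$ is the permutation associated to $\mat{P}$, and $\pi(1),\dots,\pi(r)$ is generally not increasing. The same issue occurs with $\mat{U}\mat{Q}$ on the right.

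To fix both at once I would introduce two $r\times r$ permutation matrices $\mat{R}_1$ and $\mat{R}_2$ that sort, respectively, the columns of $\mat{P}\mat{L}$ by increasing pivot row index and the rows of $\mat{U}\mat{Q}$ by increasing pivot column index. Setting $\mat{C}=\mat{P}\mat{L}\mat{R}_1$ and $\mat{E}=\mat{R}_2\mat{U}\mat{Q}$, the first is in column echelon form and the second in row echelon form. Inserting $\mat{R}_1\mat{R}_1^{-1}\mat{R}_2^{-1}\mat{R}_2$ in the middle of the PLUQ decomposition gives
\[
\mat{A}=\mat{P}\mat{L}\mat{U}\mat{Q}=(\mat{P}\mat{L}\mat{R}_1)(\mat{R}_1^{-1}\mat{R}_2^{-1})(\mat{R}_2\mat{U}\mat{Q})=\mat{C}\mat{R}\mat{E},
\]
with $\mat{R}=\mat{R}_1^{-1}\mat{R}_2^{-1}$ a product of two $r\times r$ permutation matrices, hence itself a permutation matrix of the required size.

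The only point that deserves care, and is really the heart of the argument, is checking that the sorting permutations $\mat{R}_1$ and $\mat{R}_2$ genuinely produce the echelon shape. This amounts to the elementary observation that permuting the rows of a lower triangular factor only scrambles the row positions of the pivots, so a well-chosen column permutation is sufficient to restore the staircase; symmetric remark for the upper factor. Rank of $\mat{C}$ and $\mat{E}$ equals $r$ by construction, and everything else in the statement is an immediate bookkeeping check. No algorithmic or combinatorial difficulty arises beyond these sorts.
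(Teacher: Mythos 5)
Your overall plan---sort the columns of $\mat{P}\mat{L}$ and the rows of $\mat{U}\mat{Q}$ into echelon shape and absorb the two sorting permutations into a middle factor $\mat{R}$---is the right general shape, but the step you dismiss as an elementary observation is false for an \emph{arbitrary} PLUQ decomposition, and it is exactly where the content of the theorem lies. The $i$-th column of $\mat{P}\mat{L}$ does not in general have its leading entry at row $\pi(i)$: its leading entry is at row $\min\{\pi(k) : \mat{L}_{k,i}\neq 0\}$, and the row permutation may move a below-diagonal nonzero of column $i$ above row $\pi(i)$. Concretely, $\mat{A}=\begin{smatrix}1&1\\1&0\end{smatrix}=\mat{P}\mat{L}\mat{U}\mat{Q}$ with $\mat{P}=\begin{smatrix}0&1\\1&0\end{smatrix}$, $\mat{L}=\begin{smatrix}1&0\\1&1\end{smatrix}$, $\mat{U}=\mat{Q}=\mat{I}_2$ is a perfectly legal PLUQ decomposition (pivot first on the $(2,1)$ entry), yet $\mat{P}\mat{L}=\begin{smatrix}1&1\\1&0\end{smatrix}$ has both column leading entries in row $1$, so no column permutation $\mat{R}_1$ makes $\mat{P}\mat{L}\mat{R}_1$ a column echelon form and your construction cannot be completed. (A different PLUQ decomposition of this $\mat{A}$ would work; the point is that ``any PLUQ decomposition'' does not.)

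The missing hypothesis is that the PLUQ decomposition reveals the rank profile matrix. For such decompositions, Theorem~\ref{th:PLPT} states that $\mat{P}\begin{bmatrix}\mat{L}&\mat{0}\end{bmatrix}\mat{P}^T$ is lower triangular and $\mat{Q}^T\begin{bmatrix}\mat{U}\\ \mat{0}\end{bmatrix}\mat{Q}$ is upper triangular, which is precisely the property you need: each column of $\mat{P}\mat{L}$ keeps its leading entry at its pivot row, and each row of $\mat{U}\mat{Q}$ keeps its leading entry at its pivot column. Only then do the sorting permutations $\mat{R}_1,\mat{R}_2$ produce genuine echelon forms, and the bookkeeping $\mat{A}=(\mat{P}\mat{L}\mat{R}_1)(\mat{R}_1^{-1}\mat{R}_2^{-1})(\mat{R}_2\mat{U}\mat{Q})$ goes through. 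This corrected route is essentially the construction behind the cited result, and it is also how the paper proceeds for its left-triangular variant in Theorem~\ref{th:genbruhatoverlap}, where the echelon factors $\mat{C}$ and $\mat{E}$ are obtained by permuting the columns of $\mathcal{L}$ and the rows of $\mathcal{U}$ of the Bruhat generator---objects that exist only because the underlying elimination reveals the rank profile matrix. So your proof is salvageable, but it must start from an RPM-revealing PLUQ decomposition (which requires the specific pivoting strategies of the works cited in Section~\ref{sec:prelim}, not generic Gaussian elimination), and establishing that starting point is the heart of the matter rather than a bookkeeping check.
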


We will also need an additional structure on the echelon form factors.
\begin{definition} 
  Two non-zero columns of matrix are non-overlapping if one has its leading element
  below the trailing element of the other.
\end{definition}
\begin{definition}
A matrix is $s$-overlapping if any sub-set of $s+1$ of its non-zero columns contains at least a pair that are non-overlapping.
    \end{definition}

The motivation for introducing this structure is that left triangular matrices of
quasiseparable order $s$ have a generalized Bruhat decomposition with echelon form factors
$\mat{C}$ and $\mat{E}$ that are  $s$-overlapping.

\begin{theorem}\label{th:genbruhatoverlap}
  For any $n\times n$ left triangular matrix $\mat{A}$ of quasiseparable
  order $s$ and of rank $r$, there is a generalized Bruhat decomposition of the form
    $\mat{A}=\LTP(\mat{C}\mat{R}\mat{E})$ where $\mat{C}$ and $\mat{E}^T$ are $s$-overlapping.
\end{theorem}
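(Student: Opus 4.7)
The plan is to construct $\mat{C}, \mat{R}, \mat{E}$ by compressing the Bruhat generator of Section~\ref{sec:CB:def} — extracting its non-zero columns and rows into echelon factors — and then to derive the $s$-overlapping property directly from the bound on the ranks of leading submatrices of the rank profile matrix.

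First, I would take a PLUQ decomposition $\mat{A}=\mat{P}\mat{L}\mat{U}\mat{Q}$ revealing $\mathcal{R}_{\mat{A}}$, whose pivots sit at positions $(i_1,j_1),\dots,(i_r,j_r)$, and form the Bruhat generator matrices $\mathcal{L}$ and $\mathcal{U}$ as in~\eqref{eq:storage:L} and~\eqref{eq:storage:U}. Each non-zero column of $\mathcal{L}$ comes from a pivot $(i_k,j_k)$ with $i_k+j_k\le n$, has its leading entry at row $i_k$, and is supported on rows $\{i_k,\dots,n-j_k\}$ since $\mathcal{L}$ is left triangular; symmetrically for the non-zero rows of $\mathcal{U}$. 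I would then define $\mat{C}$ (resp.\ $\mat{E}$) by keeping only these non-zero columns of $\mathcal{L}$ (resp.\ rows of $\mathcal{U}$) and reordering them so that their leading row (resp.\ column) indices are strictly increasing. Because the pivots of $\mathcal{R}_{\mat{A}}$ lie at distinct rows and columns, these sorts are well defined and yield $\mat{C}$ in column echelon form and $\mat{E}$ in row echelon form. The two relabellings combine into a single $r'\times r'$ permutation matrix $\mat{R}$ (where $r'$ is the number of non-zero columns of $\mathcal{L}$) such that $\mat{C}\mat{R}\mat{E}=\mathcal{L}\mathcal{R}^T\mathcal{U}$, and Lemma~\ref{lem:LEUstorage} then gives $\mat{A}=\LTP(\mat{C}\mat{R}\mat{E})$.

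The core of the proof is the verification of $s$-overlapping. The column of $\mat{C}$ arising from the pivot $(i_k,j_k)$ is supported in $\{i_k,\dots,n-j_k\}$, so a fixed row index $i^*$ lies in this support exactly when $i_k\le i^*$ and $j_k\le n-i^*$, i.e.\ exactly when $(i_k,j_k)$ is a pivot of $\mathcal{R}_{\mat{A}}$ lying inside the leading $i^*\times(n-i^*)$ submatrix. By Definition~\ref{def:rpm}, the number of such pivots equals the rank of that leading block of $\mat{A}$, which by quasiseparability is at most $s$. Two non-zero columns overlap iff their supports intersect as intervals, and a family of integer intervals is pairwise intersecting iff all members share a common point; therefore no $s+1$ columns of $\mat{C}$ can be pairwise overlapping, which is exactly the $s$-overlapping property. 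The symmetric argument applied to the rows of $\mathcal{U}$ (equivalently to the columns of $\mat{E}^T$), using leading $(n-j^*)\times j^*$ submatrices of $\mathcal{R}_{\mat{A}}$, shows that $\mat{E}^T$ is also $s$-overlapping.

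The main obstacle I anticipate is step two: tracking the two independent sorts — of columns of $\mathcal{L}$ and of rows of $\mathcal{U}$ — and encoding their composition into a single permutation matrix $\mat{R}$ so that both echelon forms appear \emph{and} the identity $\mat{A}=\LTP(\mat{C}\mat{R}\mat{E})$ is preserved. Once that bookkeeping is settled, the remainder is a direct translation between pivot locations, column/row supports, and ranks of leading submatrices.
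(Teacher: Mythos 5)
Your proposal is correct and follows essentially the same route as the paper: form the Bruhat generator $(\mathcal{L},\mathcal{R},\mathcal{U})$, permute its non-zero columns and rows into echelon factors $\mat{C}$, $\mat{E}$ linked by a pairing permutation $\mat{R}$ so that Lemma~\ref{lem:LEUstorage} gives $\mat{A}=\LTP(\mat{C}\mat{R}\mat{E})$, and refute the existence of $s+1$ pairwise overlapping columns by counting pivots of the rank profile matrix inside a leading $i^*\times(n-i^*)$ submatrix, whose rank is at most $s$ by quasiseparability. The only cosmetic difference is that you produce the common row index $i^*$ by the one-dimensional Helly property of the intervals spanned by each column, whereas the paper takes $i^*$ to be the largest leading row index $i_{s+1}$ among the chosen columns, which is the same argument.
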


\begin{proof}
Let $(\mathcal{L},\mathcal{R},\mathcal{U})$ be a Bruhat generator for
$\mat{A}$. The matrix $\mathcal{L}$ is $s$-overlapping: otherwise, there would be
a subset $S$ of $s+1$ of columns such that no pair of them is
non-overlapping. Let $((i_1,j_1),\dots,(i_{s+1},j_{s+1}))$ be the coordinates of
their leading elements sorted by increasing row index : $i_1<i_2<\dots<i_{s+1}$.
Since $\mathcal{L}$ is left triangular, $j_{s+1}\leq n-i_{s+1}$. The
trailing elements of every other column of $S$ must be below row $i_{s+1}$,
hence, $j_k\leq n-i_{s+1}$ for all $k\leq s$ since $\mathcal{L}$ is left
triangular. Consequently the $i_{s+1}\times (n-i_{s+1})$ leading submatrix of
$\mathcal{L}$ contains $s+1$ pivots, a contradiction.
The same reasonning
applies to show that $\mat{E}^T$ is $s$-overlapping.
Consider the permutation matrix $\mathcal{Q}$ such that
$\mathcal{L}\mathcal{Q}= \begin{bmatrix}  \mat{C}&\mat{0}_{m\times (n-r)}\end{bmatrix}$ is in column
echelon form.
Similarly let $\mathcal{P}$ be the permutation matrix such that
$\mathcal{P}\mathcal{U}= \begin{bmatrix}\mat{E}\\\mat{0}_{(m-r)\times n}\end{bmatrix}$, and
remark that $\mat{R} = \begin{bmatrix}  \mat{I}_r&\mat{0}\end{bmatrix} \mathcal{Q}^T\mathcal{R}^T\mathcal{P}^T\begin{bmatrix}\mat{I}_r\\\mat{0}\end{bmatrix}$ is a permutation matrix and verifies $\mat{A}=\LTP(\mat{C}\mat{R}\mat{E})$.
\end{proof}

The $s$-overlapping shape of the echelon form factors in the generalized Bruhat
decomposition allow to further compress it as follows.

\begin{proposition}\label{prop:soverlap}
Any $s$-overlapping $m\times r$  matrix $\mat{A}$  can be written
$\mat{A}=(\mat{D} + \mat{S}\mat{T})\mat{P}$ where $\mat{P}$ is a permutation
matrix, $\mat{T}\in\{0,1\}^{r\times r}$ has at most one non zero element per row and
$\mat{D}=\text{Diag}(\mat{D}_1,\dots,\mat{D}_{t})$,
$\mat{S}=
\begin{smatrix}
  \mat{0}\\
  \mat{S}_2 & \mat{0}\\
   & \ddots &\ddots\\
   & &\mat{S}_t
\end{smatrix}$  where each $\mat{D}_i$ and $\mat{S}_i$ is $k_i\times s$, except
$\mat{D}_{t}$ and $\mat{S}_t$ having possibly fewer columns than~$s$  and
$\sum_{i=1}^tk_i=n$.
\end{proposition}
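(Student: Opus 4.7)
The plan is to sort the non-zero columns of $\mat{A}$ by leading row index (via the permutation $\mat{P}$), group them into column-blocks of at most $s$ consecutive columns in this order, choose matching row-block boundaries, and then split the support of each column along those boundaries: the ``top'' piece goes into $\mat{D}$ while every ``lower'' piece is parked in a free slot of the corresponding subdiagonal block of $\mat{S}$, with $\mat{T}$ routing each such slot back to its original column position.

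Concretely, I would let $c_1,\dots,c_p$ denote the non-zero columns of $\mat{A}\mat{P}^T$ in sorted order (ties broken arbitrarily), with leading row $\ell_j$ and trailing row $t_j$, and partition them into column-blocks $G_1,\dots,G_t$ of $s$ consecutive columns each, with $G_t$ possibly smaller. Setting $r_1=1$, $r_{i+1}=\ell_{is+1}$ for $1\le i<t$, and $r_{t+1}=m+1$ defines row-blocks $[r_i,r_{i+1})$ of sizes $k_i$ summing to $m$, and every column of $G_i$ has its leading row in row-block $i$ by construction. For each $c_j\in G_i$ I would split the support $[\ell_j,t_j]$ along these boundaries: the piece lying in row-block $i$ is stored at $c_j$'s slot of $\mat{D}_i$, and for every further row-block $i+m$ with $m\ge 1$ that $c_j$'s support meets, the corresponding piece is stored in a free slot of $\mat{S}_{i+m}$, which occupies row-block $i+m$ and column-block $i+m-1$ of $\mat{S}$. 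The matrix $\mat{T}$ is then built by placing a single $1$ at the row indexed by each occupied $\mat{S}$-slot and at the column indexed by $c_j$'s slot within $G_i$; this automatically gives at most one $1$ per row of $\mat{T}$, since each slot of $\mat{S}$ is dedicated to the continuation of a unique column.

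The only step where the $s$-overlapping hypothesis truly enters, and the main potential obstacle, is showing that the $s$ slots of each block $\mat{S}_{k+1}$ always suffice. A column $c_j$ uses a slot of $\mat{S}_{k+1}$ exactly when $\ell_j<r_{k+1}\le t_j$, i.e.\ when $c_j$ is active at row $r_{k+1}$. Viewing the non-zero columns as intervals $[\ell_j,t_j]$ on the line, the $s$-overlapping hypothesis states that no $s+1$ of them are pairwise intersecting; since any pairwise-intersecting family of real intervals has a common point by Helly's theorem, this is equivalent to saying that at any given row at most $s$ columns are simultaneously active. Hence at most $s$ distinct columns demand a slot in $\mat{S}_{k+1}$, and all the pieces fit. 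A direct entry-by-entry inspection then shows $\mat{D}+\mat{S}\mat{T}=\mat{A}\mat{P}^T$, and post-multiplying by $\mat{P}$ yields the claimed decomposition.
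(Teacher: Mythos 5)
Your proposal is correct and follows essentially the same route as the paper's proof (Algorithm~\ref{alg:compress}): sort the columns by leading row index, keep the block diagonal as $\mat{D}$, and use the fact that at most $s$ columns can straddle a row-block boundary---since any such columns are pairwise overlapping, this is exactly the paper's zero-column-availability argument---to fit all remaining pieces into the subdiagonal blocks of $\mat{S}$, with $\mat{T}$ recording the routing back to the original columns. The only differences are presentational and harmless: you assign each row-block piece to a slot statically, whereas the paper migrates column residues greedily one block-column at a time and builds $\mat{T}$ as a product of elementary column operations; your invocation of Helly is unnecessary (only the trivial direction of the equivalence is used), and your claim that every column of $G_i$ has its leading row in row-block $i$ can fail when leading rows tie at a block boundary, but the slot counting only needs that straddling columns are pairwise overlapping, so the argument stands.
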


Intuitively, the permutation $\mat{P}^T$ sorts the columns of $\mat{A}$ in
increasing order of their leading row index. Cutting the columns in slices
of dimension $s$ makes $\mat{A}\mat{P}^T$ block lower triangular. The block
diagonal is $\mat{D}$, and the remaining part can be folded into a block
sub-diagonal matrix $\mat{S}$ thanks to the $s$-overlapping property.

\begin{algorithm}[htbp]
{
\caption{Compress-to-Block-Bidiagonal} \label{alg:compress}
 \begin{algorithmic}[1]
\Require{$\mat{A}$: an $s$-overlapping matrix}
\Ensure{$\mat{D}, \mat{S}, \mat{T}, \mat{P}$: such that $\mat{A}=(\mat{D} +
  \mat{S}\mat{T})\mat{P}$ as in
  Proposition~\ref{prop:soverlap}.}

\State $\mat{P}\leftarrow$ a permutation sorting the columns of $\mat{A}$ by
increasing row position of their leading coefficient.
\State $\mat{C}\leftarrow \mat{A}  \mat{P} \begin{smatrix}  \mat{I}_r\\
  \mat{0}\end{smatrix}$ where $r$ is the number of non-zero columns in $\mat{A}$ 
\State Split $\mat{C}$ in column slices of width $s$.  \Comment $\mat{C}=
\begin{smatrix}
  \mat{C}_{11}& \\
  \mat{C}_{21} & \mat{C}_{22} \\
  \vdots & \vdots& \ddots\\
  \mat{C}_{t1} & \mat{C}_{t2} & \dots & \mat{C}_{tt}
\end{smatrix}$ where $\mat{C}_{ii}$ is $k_i\times s$ $\forall i<t$.
\State $\mat{D} \leftarrow \text{Diag}(\mat{C}_{11},\dots,\mat{C}_{tt})$
\State \label{step:CminusD} $\mat{C} \leftarrow \mat{C}-\mat{D} = 
\begin{smatrix}\mat{0}\\ \mat{C}_{21}  \\
\vdots &  \ddots&\ddots\\
  \mat{C}_{t1} &  \dots & \mat{C}_{t,t-1}& \mat{0}
\end{smatrix}
$
\State $\mat{T}\leftarrow \mat{I}_n$
\For{$i=3\dots t$} \label{step:loop}
   \For{each non zero column $j$ of $ \begin{smatrix}  \mat{C}_{i,i-2} \\ \dots \\
       \mat{C}_{t,i-2}\end{smatrix} $}
      \State \label{step:zerocol} Let $k$ be a zero column of 
$ \begin{smatrix}  \mat{C}_{i,i-1} \\ \dots \\ \mat{C}_{t,i-1}\end{smatrix}$
      \State \label{step:movecol} Move column $j$ in $\begin{smatrix}  \mat{C}_{i,i-2}\\ \vdots\\\mat{C}_{t,i-2}   \end{smatrix}$ to 
   column $k$ in  $\begin{smatrix}  \mat{C}_{i,i-1}\\ \vdots\\\mat{C}_{t,i-1}   \end{smatrix}$.
      \State $\mat{T}\leftarrow (\mat{I}_n + \mat{\Delta}^{(k,j)}-\mat{\Delta}^{(k,k)})\times \mat{T}$
   \EndFor
\EndFor

\State $\mat{S}\leftarrow\mat{C}=\begin{smatrix}
  \mat{0}& \\
  \mat{C}_{21} & \mat{0} \\
       & \ddots& \ddots\\
       &       & \mat{C}_{t,t-1} & \mat{0}
\end{smatrix}$
\State \Return $(\mat{D},\mat{S}, \mat{T}, \mat{P}
)$
\end{algorithmic}
}
\end{algorithm}
Algorithm~\ref{alg:compress} is a constructive proof of
Proposition~\ref{prop:soverlap}, computing a compact representation of any
$s$-overlapping matrix.
\begin{proof}
Since $\mat{A}$ is $s$-overlapping, there exists a permutation $\mat{P}$ such that
$\mat{C}=\mat{A}\mat{P}$ is block lower triangular, with blocks of column
dimension $s$ except possibly the last one of column dimension $\leq s$.
Note that for every $i$, the dimensions of the blocks $\mat{S}_i$ and $\mat{D}_i$ are that of the
 block $\mat{C}_{ii}$: $k_i\times s$. 
We then prove that there always exists a zero column to pick at
step~\ref{step:zerocol}.
In the first row of $\begin{bmatrix} \mat{C}_{i1}&\dots&
  \mat{C}_{ii}\end{bmatrix}$, there is a non zero element located in the block
$\mat{C}_{ii}$. As any non-zero column of $\begin{bmatrix} \mat{C}_{i1}&\dots&\mat{C}_{i,i-1}\end{bmatrix}$
  has a leading coefficient in $\mat{A}$ at a row index stricly lower than $i$,
  there can not be more than $s-1$ of them.
  These $s-1$ columns of  $\begin{bmatrix} \mat{C}_{i1}&\dots&\mat{C}_{i,i-1}\end{bmatrix}$ can all be gathered in the block
$\mat{C}_{i,i-1}$ of column dimension~$s$.

There only remains to show that $\mat{S}\mat{T}$ is the matrix $\mat{C}$ of
step~\ref{step:CminusD}. For every pair of indices $(j,k)$ selected in
loop~\ref{step:loop}, right multiplication by
$(\mat{I}_n+\mat{\Delta}^{(k,j)}-\mat{\Delta}^{(k,k)})$ adds up column $k$ to column
$j$ and zeroes out column $k$. On matrix $\mat{S}$, this has the effect of
reverting each operation done at step~\ref{step:movecol} in the reverse order
of  the loop~\ref{step:loop}.
  \end{proof}

\begin{proposition}
  If an $s$-overlapping matrix $\mat{A}$ is in column echelon form, then, the
  structured representation $(\mat{D},\mat{S},\mat{T},\mat{P})$ is such that
  $\mat{P}=\mat{I}_r$ and $k_i\geq s \ \forall i<t$.
\end{proposition}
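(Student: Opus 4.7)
The plan is to verify the two claims in turn, exploiting the fact that column echelon form of $\mat{A}$ forces the leading rows of its non-zero columns to be strictly increasing.

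For $\mat{P}=\mat{I}_r$: in column echelon form the $r$ non-zero columns of $\mat{A}$ already have their leading coefficients in strictly increasing row positions. Hence the sorting permutation built at the first line of Algorithm~\ref{alg:compress} acts as the identity on these columns, which translates, under the $r\times r$ convention of Proposition~\ref{prop:soverlap}, to $\mat{P}=\mat{I}_r$.

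For the bound $k_i\ge s$, let $\ell_1<\ell_2<\cdots<\ell_r$ be the (strictly increasing) leading rows of $\mat{A}$. With $\mat{P}=\mat{I}_r$, the column blocks of $\mat{C}$ consist of $s$ consecutive columns each (the last possibly narrower). The requirement $\mat{C}_{ij}=\mat{0}$ for $j>i$ forces row block $i+1$ to start at row $\ell_{is+1}$: otherwise column block $i+1$, whose first column has leading row $\ell_{is+1}$, would contribute a non-zero entry to row block $i$. This pins down row block $i$ as $[\ell_{(i-1)s+1},\ell_{is+1}-1]$ for $1<i<t$ and row block $1$ as $[1,\ell_{s+1}-1]$. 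Iterating $\ell_{j+1}\ge\ell_j+1$ (the strict increase forced by column echelon form) exactly $s$ times yields
\[
k_i = \ell_{is+1}-\ell_{(i-1)s+1} \ge (is+1)-((i-1)s+1) = s \qquad (1<i<t),
\]
and similarly $k_1=\ell_{s+1}-1\ge s$ since $\ell_{s+1}\ge s+1$.

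The main subtlety is that Algorithm~\ref{alg:compress} does not explicitly record the row partition, so the argument rests on identifying the natural maximal partition forced by block lower triangularity. Once that is pinned down, the bound $k_i\ge s$ drops out immediately from the strict monotonicity of the $\ell_j$, with no further computation required.
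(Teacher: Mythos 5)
Your proof is correct and follows essentially the same route as the paper's: the column echelon form makes the leading row indices strictly increasing, so the sorting permutation is trivial, and the $s$ pivots falling inside each diagonal block force $k_i\geq s$. You merely spell out explicitly (via the indices $\ell_j$ and the maximal row partition, which is indeed the one the paper implicitly uses, since its diagonal blocks $\mat{C}_{ii}$ are taken to contain the pivots of their column slice) what the paper states in one line.
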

\begin{proof}
The leading elements of each column are already sorted in a column echelon
form, hence $\mat{P}=\mat{I}_r$.
Then, each  block $\mat{C}_{ii}$ contains $s$ pivots, hence $k_i\geq s$.
\end{proof}


We can now define the compact Bruhat representation.
\begin{definition}\label{def:CB}
  The compact Bruhat representation of an $n\times n$ $s$-quasiseparable left
  triangular matrix $\mat{A}$ is given by the tuples
  $(\mat{D}^{\mat{C}_\mat{A}},\mat{S}^{\mat{C}_\mat{A}},\mat{T}^{\mat{C}_\mat{A}})$,
  $(\mat{D}^{\mat{E}_\mat{A}},\mat{S}^{\mat{E}_\mat{A}},\mat{T}^{\mat{E}_\mat{A}})$
where $\mat{D}^{\mat{C}_\mat{A}}, \mat{S}^{\mat{C}_\mat{A}},(\mat{D}^{\mat{E}_\mat{A}})^T$ and $(\mat{S}^{\mat{E}_\mat{A}})^T$ are $n\times r$
block diagonal, with blocks of column dimension $s$, and 
$\mat{T}^{\mat{C}_\mat{A}}$ and $(\mat{T}^{\mat{E}_\mat{A}})^T$ are lower triangular
$\{0,1\}$-matrices with $r$ coefficients equals to $1$ placed on distinct rows,
and a permutation matrix $\mat{R}^\mat{A}$
such that
$$
\left\{
\begin{array}{lll}
  \mat{C}^\mat{A} &=&\mat{D}^{\mat{C}_\mat{A}} + \mat{S}^{\mat{C}_\mat{A}} \mat{T}^{\mat{C}_\mat{A}},\\
  \mat{E}^\mat{A} &=&\mat{D}^{\mat{E}_\mat{A}} +\mat{T}^{\mat{E}_\mat{A}}\mat{S}^{\mat{E}_\mat{A}} \\
\end{array}
\right.
$$
and $ \mat{A} = \LTP(\mat{C}^\mat{A} \mat{R}^\mat{A}\mat{E}^\mat{A})$ is a generalized Bruhat decomposition of $\mat{A}$.
\end{definition}
\begin{figure}[htbp]
  \centering
  \begin{minipage}{.4\textwidth}
  \includegraphics[height=\myfigsize]{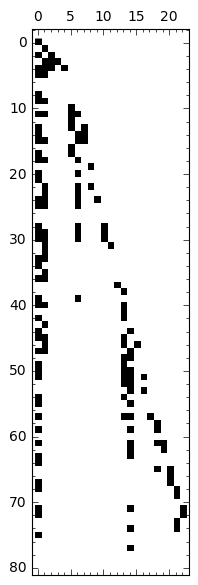}
   \includegraphics[height=\myfigsize]{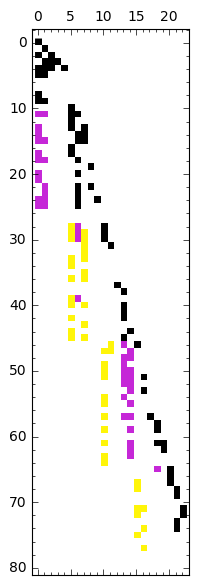}
  \end{minipage}
  \hfill
    \begin{minipage} {.55\textwidth}
     \includegraphics[width=\myfigsize]{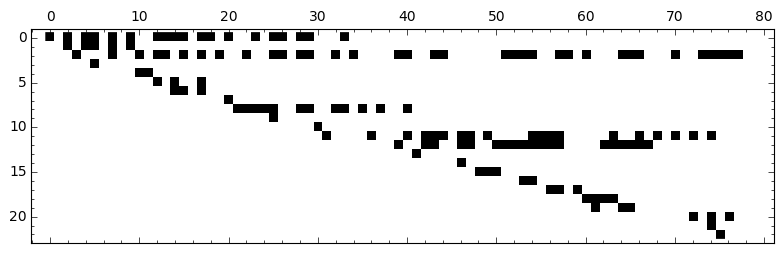}\\
     \includegraphics[width=\myfigsize]{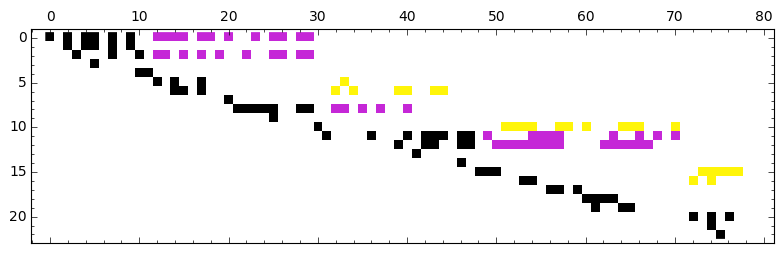}
    \end{minipage}
  \caption{Support of the matrices $\mat{C}=\mathcal{L}\mathcal{Q}$ (left),
    $\mat{E}=\mathcal{P}\mathcal{U}$ (top right) of the $s$-overlapping CRE
    decomposition of Theorem~\ref{th:genbruhatoverlap} applied to the matrix of
    Figure~\ref{fig:bruhatstorage}. The Compression to the block bi-diagonal
    structure of the corresponding compact Bruhat generator is shown in the
    central and bottom right matrices. There,
 $\mat{D}$ is in black and
  $\mat{S}$ in magenta and yellow; those rows and columns 
  moved at step~\ref{step:movecol} of Algorithm~\ref{alg:compress} are
   in yellow.}
  \label{fig:compactbruhat}
\end{figure}
%
\section{Computing with RRR representations}
\label{sec:RRR}

In this section, we will keep considering that the RRR representation
is based on any rank revealing factorization (RR), which could originate from
various matrix factorizations: PLUQ, CUP, PLE, QR, SVD, etc.
We will assume that there exists an algorithm \texttt{RRF} computing such a rank
revealing factorization. For instance, PLUQ, CUP, PLE decomposition algorithms can be used
to compute such a factorization in time $T_\texttt{RRF}(m,n,r)=\GO{mnr^{\omega-2}}$ on an $m\times n$ matrix of rank $r$~\citep{JPS13}.

\subsection{Construction of the generator}

The construction of the RRR representation simply consists in computing rank revealing
factorizations of all off-diagonal submatrices in a binary splitting of the main
diagonal.
Let $T_\texttt{RRR}(n,s)$ denote the cost of the computation of the  binary tree generator
for an $n\times n$ matrix of order of quasiseparability $s$.
It satisfies the recurrence relation
$T_\texttt{RRR}(n,s) = 2T_\texttt{RRF}(n,n,s) + 2T_\texttt{RRR}(n/2,s)$
which solves in 
$
T_\texttt{RRR}(n,s) = \GO{ s^{\omega-2}n^2 }
$.

\subsection{Matrix-vector product}
In the RRR representation, the application of a vector to the
quasiseparable matrix takes the same amount of field
operations as the number of coefficients used for its representation. This yields a
cost of $\GO{n(r_L \log\frac{n}{r_L} + r_U \log \frac{n}{r_U})} = \GO{sn\log\frac{n}{s}}$ field
operations.

\subsection{Auxiliary algorithms}
In the following, we present a set of routines that will be used to build
multiplication and inversion algorithms for RRR representations.
Algorithm~\ref{alg:RRRExpand} expands a matrix from an RRR
representation to a dense representation.
\begin{algorithm}[htbp]
{
  \caption{\texttt{RRRExpand}: expands an RRR representation into a dense representation} \label{alg:RRRExpand}
  \begin{algorithmic}[1]
    \Require{$\mat{A}$, an $n\times n$ $s$-quasiseparable matrix in an RRR representation,}
    \Ensure{$\mat{B} \leftarrow \mat{A}$ in a dense representation.}
    \If{$n\leq s$}
     \Return $\mat{B}\leftarrow \mat{A}$
    \EndIf
    \State $\mat{B}_{11}\leftarrow \texttt{RRRExpand}(\mat{A}_{11})$
    \State $\mat{B}_{22}\leftarrow \texttt{RRRExpand}(\mat{A}_{22})$
    \State $\mat{B}_{12}\leftarrow \mat{L}^\mat{A}_{12} \mat{R}^\mat{A}_{12}$
    \State $\mat{B}_{21}\leftarrow \mat{L}^\mat{A}_{21} \mat{R}^\mat{A}_{21}$
    \State \Return $\mat{B}\leftarrow 
    \begin{bmatrix}
      \mat{B}_{11}& \mat{B}_{12}\\
      \mat{B}_{21}& \mat{B}_{22}
    \end{bmatrix}
    $
\end{algorithmic}
}
\end{algorithm}
The recurring relation
$T_\texttt{RRRExpand}(n,s) =2T_\texttt{RRRExpand}(n/2,s)+O(n^2s^{\omega-2})$ for $ n>s$
yields
directly $$T_\texttt{RRRExpand}(n,s) = O(n^2s^{\omega-2}).$$

Algorithm~\ref{alg:RRxRR} multiplies two rank revealing factorizations and
outputs the result in a rank revealing factorization.
As $\mat{L}^{\mat{A}}$ and $\mat{L}^{\mat{X}}$ have full column rank, so is
their product. Hence the $\mat{L}^{\mat{C}}\mat{R}^{\mat{C}}$ is a rank
revealing factorization of the product.
\begin{algorithm}[htbp]
{
  \caption{\texttt{RRxRR}: multiplies two matrices stored as rank revealing factorization} \label{alg:RRxRR}
  \begin{algorithmic}[1]
    \Require{$\mat{A}$, an $m\times k$ matrix of rank $\leq s$ in an RR representation $\mat{L}^\mat{A}\times \mat{R}^\mat{A}$,}
    \Require{$\mat{B}$, an $k\times n$ matrix of rank $\leq t$ in an RR representation  $\mat{L}^\mat{B}\times \mat{R}^\mat{B}$,}
    \Ensure{$\mat{C} \leftarrow \mat{A}\times\mat{B}$ in an RR representation  $\mat{L}^\mat{C}\times \mat{R}^\mat{C}$.}
      \State $\mat{X}\leftarrow \mat{R}^\mat{A} \mat{L}^\mat{B}$
    \State $(\mat{L}^\mat{X},\mat{R}^\mat{X}) \leftarrow \texttt{RRF}(\mat{X})$\label{step:RRF1}
    \Comment {Computes the RR factorization $\mat{X}=\mat{L}^\mat{X}\times \mat{R}^\mat{X}$}
    \State $\mat{L}^\mat{C} \leftarrow \mat{L}^\mat{A}\mat{L}^\mat{X}$
    \State $\mat{R}^\mat{C} \leftarrow \mat{R}^\mat{X}\mat{R}^\mat{B}$
  \end{algorithmic}
}
\end{algorithm}
The resulting cost (assuming $s\leq t$ without loss of generality) is
$$ T_\texttt{RRxRR}(m,k,n,s,t) =\GO{s^{\omega-2}tk}+T_\texttt{RRF}(s,t)+
\GO{r_\mat{X}^{\omega-2}(ms+nt)}.$$
With $n=\Theta(m)=\Theta(k)$, this is $  T_\texttt{RRxRR}(n,s,t) =\GO{(s+t)^{\omega-1}n}.$


Algorithm~\ref{alg:RR+RR} adds two rank revealing factorizations.
It first stacks together the left sides and the right sides  of the rank
revealing factorizations of the two terms. The resulting factorization may not
reveal the rank as the inner dimension may be larger.
Therefore, a rank revealing factorization of each factor is first computed,
before invoquing \texttt{RRxRR} to obtain an RR representation of their product.
\begin{algorithm}[htbp]
{
  \caption{\texttt{RR+RR}: adds two matrices stored as rank revealing factorization} \label{alg:RR+RR}
  \begin{algorithmic}[1]
    \Require{$\mat{A}$, an $m\times n$ matrix of rank $\leq s$ in an RR representation $\mat{L}^\mat{A}\times \mat{R}^\mat{A}$}
    \Require{$\mat{B}$, an $m\times n$ matrix of rank $\leq t$ in an RR representation $\mat{L}^\mat{B}\times \mat{R}^\mat{B}$}
    \Ensure{$\mat{D} \leftarrow \mat{A}+\mat{B}$ in an RR representation
      $\mat{L}^\mat{D}\times \mat{R}^\mat{D}$.}
    \State $\mat{X} \leftarrow \begin{bmatrix} \mat{L}^\mat{A} & \mat{L}^\mat{B} \end{bmatrix}$; 
    $\mat{Y} \leftarrow \begin{bmatrix} \mat{R}^\mat{A} \\ \mat{R}^\mat{B} \end{bmatrix}$

    \State $(\mat{L}^\mat{X},\mat{R}^\mat{X}) \leftarrow \texttt{RRF}(\mat{X})$\label{step:RRF2}
    \Comment {$\mat{X}=\mat{L}^\mat{X}\times \mat{R}^\mat{X}$,
      $r_\mat{X}=\text{rank}(\mat{X})$; $\mat{L}^\mat{X}$ is $m\times r_\mat{X}$
      and $\mat{R}^\mat{X}$ is  $r_\mat{X}\times n$}
    \State $(\mat{L}^\mat{Y},\mat{R}^\mat{Y}) \leftarrow \texttt{RRF}(\mat{Y})$ \label{step:RRF3}
    \Comment {$\mat{Y}=\mat{L}^\mat{Y}\times \mat{R}^\mat{Y}$,
      $r_\mat{Y}=\text{rank}(\mat{Y})$; $\mat{L}^\mat{Y}$ is $m\times r_\mat{Y}$
      and $\mat{R}^\mat{Y}$ is $r_\mat{Y} \times n$}
    \State $\mat{D}\leftarrow \texttt{RRxRR}(\mat{X} ,\mat{Y})$
    \Comment{Computes an RR representation of the product $\mat{D}=\mat{X}\mat{Y}$}
   \end{algorithmic}
}
\end{algorithm}
Assuming $n=\Theta(m)$, the time complexity is
$$ T_\texttt{RR+RR}(n,s,t) = 2T_\texttt{RRF}(n,s+t)+\GO{(s+t)^{\omega-1}n} = \GO{(s+t)^{\omega-1}n}.$$

Algorithm~\ref{alg:RRR+RR} adds a quasiseparable matrix in RRR representation
with a matrix in RR representation.
\begin{algorithm}[htbp]
  \caption{\texttt{RRR+RR}: adds a quasiseparable matrix in RRR representation and a
    rank revealing factorization} \label{alg:RRR+RR}
  \begin{algorithmic}[1]
    \Require{$\mat{A}$, an $n\times n$ $s$-quasiseparable matrix in an RRR representation}
    \Require{$\mat{B}$, an $n\times n$ matrix of rank $\leq t$ in an RR representation $\mat{L}^\mat{B}\times \mat{R}^\mat{B}$}
    \Ensure{$\mat{C} \leftarrow \mat{A} + \mat{B}$  in an RRR representation.}
    \If{$n\leq s+t$}
      \State \Return $\mat{C}\leftarrow$ $\texttt{RRRExpand}(\mat{A}) +  \mat{L}^\mat{B}\times \mat{R}^\mat{B}$
    \EndIf
    \State Split the matrices as
    $ \begin{bmatrix}\mat{C}_{11}& \mat{C}_{12}\\ \mat{C}_{21}&
      \mat{C}_{22} \end{bmatrix} \leftarrow  \begin{bmatrix}\mat{A}_{11}&
      \mat{A}_{12}\\ \mat{A}_{21}& \mat{A}_{22} \end{bmatrix}
    +\begin{bmatrix}\mat{B}_{11}& \mat{B}_{12}\\ \mat{B}_{21}&
    \mat{B}_{22} \end{bmatrix}$.
    \State $\mat{C}_{11}\leftarrow \texttt{RRR+RR}(\mat{A}_{11}, \mat{B}_{11})$ \Comment{$\mat{C}_{11}\leftarrow \mat{A}_{11}+\mat{B}_{11}$}
    \State $\mat{C}_{22}\leftarrow \texttt{RRR+RR}(\mat{A}_{22}, \mat{B}_{22})$ \Comment{$\mat{C}_{11}\leftarrow \mat{A}_{22}+\mat{B}_{22}$}
    \State $\mat{C}_{12}\leftarrow \texttt{RR+RR}(\mat{A}_{12},\mat{B}_{12})$ \Comment{$\mat{C}_{12}\leftarrow \mat{A}_{12}+\mat{B}_{12}$}
    \State $\mat{C}_{21}\leftarrow \texttt{RR+RR}(\mat{A}_{21},\mat{B}_{21})$
    \Comment{$\mat{C}_{21}\leftarrow \mat{A}_{21}+\mat{B}_{21}$}
    \State \Return $\mat{C}\leftarrow \begin{bmatrix}\mat{C}_{11}& \mat{C}_{12}\\\mat{C}_{21}& \mat{C}_{22} \end{bmatrix}$
  \end{algorithmic}

\end{algorithm}
The time complexity satisfies the recurring relation 
$$
\left\{\begin{array}{llll}
  T_\texttt{RRR+RR}(n,s,t) &=& 2T_\texttt{RRR+RR}(n/2,s,t) +  2T_\texttt{RR+RR}(n/2,s,t) & \text{ for } n > s+t\\
                         &=& 2T_\texttt{RRR+RR}(n/2,s,t) +  O(n(s+t)^{\omega-1}) \\
  T_\texttt{RRR+RR}(n,s,t) &=& T_\texttt{RRRExpand}(s+t,s)+ O((s+t)^2t^{\omega-2}) & \text{ for } n \leq s+t
  \end{array}\right.
$$
which solves in
\begin{align*}
  T_\texttt{RRR+RR}(n,s,t) &=& O((s+t)^{\omega-1}n\log \frac{n}{s+t} +
  \frac{n}{s+t}(s+t)^2(s^{\omega-2}+t^{\omega-2}) \\ &=& O((s+t)^{\omega-1}n\log
  \frac{n}{s+t}).
\end{align*}

\subsection{Quasiseparable times tall and skinny}

\begin{algorithm}[htbp]
  \caption{\texttt{RRRxTS}: multiplies a quasiseparable matrix in RRR representation
    with a tall and skinny matrix} \label{alg:RRRxTS}
  \begin{algorithmic}[1]
    \Require{$\mat{A}$, an $n\times n$ $s$-quasiseparable matrix in RRR representation}
    \Require{$\mat{B}$, an $n\times t$ matrix}
    \Ensure{$\mat{C} \leftarrow \mat{A} \mat{B}$}
    \If{$n\leq s+t$}
      \State \Return $\mat{C} \leftarrow \texttt{RRRExpand}(\mat{A})\times  \mat{B}$
    \EndIf
    \State Split the matrices as
    $ \begin{bmatrix}\mat{C}_{1}\\ \mat{C}_{2}\end{bmatrix} \leftarrow  \begin{bmatrix}\mat{A}_{11}&\mat{A}_{12}\\ \mat{A}_{21}& \mat{A}_{22} \end{bmatrix}
    \begin{bmatrix}\mat{B}_{1}\\ \mat{B}_{2} \end{bmatrix} $.
    \State $\mat{C}_{1} \leftarrow \texttt{RRRxTS}(\mat{A}_{11},\mat{B}_1)$ \Comment{$\mat{C}_{1} \leftarrow \mat{A}_{11} \mat{B}_1$}
    \State $\mat{C}_{2} \leftarrow \texttt{RRRxTS}(\mat{A}_{22},\mat{B}_2)$ \Comment{$\mat{C}_{2} \leftarrow \mat{A}_{22} \mat{B}_2$}
    \State $\mat{X}\leftarrow \mat{R}^\mat{A}_{12} \mat{B_2}$
    \State $\mat{C}_1 \leftarrow \mat{C_1} + \mat{L}^\mat{A}_{12} \mat{X}$   \Comment{$\mat{C_1} \leftarrow \mat{C_1}+\mat{A}_{12}\mat{B}_2 $}
    \State $\mat{Y}\leftarrow \mat{R}^\mat{A}_{21} \mat{B_1}$
    \State $\mat{C}_2 \leftarrow \mat{C_2} + \mat{L}^\mat{A}_{21} \mat{Y}$   \Comment{$\mat{C_2} \leftarrow \mat{C_2}+\mat{A}_{21}\mat{B}_1 $}
    \State \Return $\mat{C}\leftarrow \begin{bmatrix} \mat{C}_1\\ \mat{C}_2 \end{bmatrix} $
  \end{algorithmic}
\end{algorithm}
Algorithm~\ref{alg:RRRxTS} multiplies an $s$-quasiseparable matrix of dimension
$n\times n$  in RRR representation by a tall and skinny matrix: an
$n\times t$ rectangular dense matrix with $t\leq n$.
%

Let $T_\texttt{RRRxTS}(n,s,t)$ denote its cost.
The recurring relation
$$
\left\{\begin{array}{llll}
T_\texttt{RRRxTS}(n,s,t) &=& 2T_\texttt{RRRxTS}(n/2,s,t) +
O(n\max(s,t)\min(s,t)^{\omega-2}) &\text{ for } n> s+t\\
T_\texttt{RRRxTS}(n,s,t) &=& T_\texttt{RRRExpand}(s+t,s)+ O((s+t)^2t^{\omega-2})& \text{ for } n \leq s+t
\end{array}
\right.
$$
yields
\begin{align*}
  T_\texttt{RRRxTS}(n,s,t) &=& O((s+t)^{\omega-1}n\log \frac{n}{s+t} + \frac{n}{s+t}(s+t)^2(s^{\omega-2}+t^{\omega-2})\\
  & =& O((s+t)^{\omega-1}n\log
  \frac{n}{s+t}).
\end{align*}

From this algorithm, follows Algorithm~\ref{alg:RRRxRR}, computing the product
of an $s$-quasi\-sepa\-rable matrix in RRR representation by a rank revealing
factorization.
Similarly as for Algorithm~\ref{alg:RRxRR}, $\mat{R}^{\mat{X}}$ and
$\mat{R}^\mat{B}$ have full row rank, so has their product, which ensures that
the factors $\mat{L}^{\mat{D}},\mat{R}^{\mat{D}}$ form a rank revealing
factorization of the result.
\begin{algorithm}[htbp]
  \caption{\texttt{RRRxRR}: multiplies a quasiseparable matrix in RRR representation
    with a rank revealing factorization} \label{alg:RRRxRR}
  \begin{algorithmic}[1]
    \Require{$\mat{A}$, an $n\times n$ $s$-quasiseparable matrix in RRR representation}
    \Require{$\mat{B}$, an $n\times m$ matrix of rank $\leq t$ in an RR representation $\mat{L}^\mat{B}\times \mat{R}^\mat{B}$}
    \Ensure{$\mat{D} \leftarrow \mat{A} \mat{B}$ in a rank revealing factorization $\mat{L}^\mat{D}\times \mat{R}^\mat{D}$}
    \State $\mat{X} \leftarrow \texttt{RRRxTS}(\mat{A},\mat{L}^\mat{B})$ \Comment{$\mat{X} \leftarrow \mat{A} \mat{L}^\mat{B}$}
    \State $ (\mat{L}^\mat{X},\mat{R}^\mat{X}) \leftarrow \texttt{RRF}(\mat{X})$ \label{step:RRRxRR:cup}
    \Comment{Computes the RR factorization $\mat{X} = \mat{L}^\mat{X}\times \mat{R}^\mat{X}$}
    \State $\mat{L}^\mat{D}\leftarrow \mat{L}^\mat{X}$
    \State $\mat{R}^\mat{D}\leftarrow \mat{R}^\mat{X}\mat{R}^\mat{B}$
  \end{algorithmic}
\end{algorithm}
Its time complexity is $$T_\texttt{RRRxRR}(n,s,t) =T_\texttt{RRRxTS}(n,s,t) +T_\texttt{RRF}(n,t)+ O(nt^{\omega-1}) =O((s+t)^{\omega-1}n\log\frac{n}{s+t}) $$

\subsection{Quasiseparable times Quasiseparable}
The product of an $s$-quasiseparable matrix by a $t$-quasiseparable matrix is an
$(s+t)$-quasiseparable matrix~\citep{EiGo99}.
Algorithm~\ref{alg:RRRxRRR}, calling
Algorithms~\ref{alg:RRxRR}, \ref{alg:RR+RR}, \ref{alg:RRR+RR}
and~\ref{alg:RRRxTS}, shows how to perform such a multiplication with the RRR representations.
\begin{algorithm}[htbp]
{
  \caption{\texttt{RRRxRRR}} \label{alg:RRRxRRR}
  \begin{algorithmic}[1]
    \Require{$\mat{A}$, an $n\times n$ $s$-quasiseparable matrix in an RRR representation,}
    \Require{$\mat{B}$, an $n\times n$ $t$-quasiseparable matrix in an RRR representation,}
    \Ensure{$\mat{C} \leftarrow \mat{A}\times\mat{B}$ in an RRR representation.}
    \If{$n\leq s+t$}
       \State \Return $\mat{C} \leftarrow \texttt{RRRExpand}(\mat{A}) \times \texttt{RRRExpand}(\mat{B})$
       
       \EndIf
    \State Split the matrices as
    $ \begin{bmatrix}\mat{C}_{11}& \mat{C}_{12}\\ \mat{C}_{21}&
      \mat{C}_{22} \end{bmatrix} \leftarrow  \begin{bmatrix}\mat{A}_{11}&
      \mat{A}_{12}\\ \mat{A}_{21}& \mat{A}_{22} \end{bmatrix}
    \begin{bmatrix}\mat{B}_{11}& \mat{B}_{12}\\ \mat{B}_{21}&
    \mat{B}_{22} \end{bmatrix}$.
    \State $\mat{C}_{11}\leftarrow$ \texttt{RRRxRRR} $(\mat{A}_{11},\mat{B}_{11})$ \Comment{$\mat{C}_{11} \leftarrow \mat{A}_{11} \mat{B}_{11}$}
    \State $\mat{C}_{22}\leftarrow$ \texttt{RRRxRRR} $(\mat{A}_{22},\mat{B}_{22})$ \Comment{$\mat{C}_{22} \leftarrow \mat{A}_{22} \mat{B}_{22}$}
    \State $\mat{X} \leftarrow$ \texttt{RRxRR} $(\mat{A}_{12},\mat{B}_{21})$ \Comment{$\mat{X} \leftarrow \mat{A}_{12} \mat{B}_{21}$}
    \State $\mat{Y} \leftarrow$ \texttt{RRxRR} $(\mat{A}_{21},\mat{B}_{12})$ \Comment{$\mat{Y} \leftarrow \mat{A}_{21} \mat{B}_{12}$}
    \State  $\mat{C}_{11} \leftarrow$ \texttt{RRR+RR} $(\mat{C}_{11},\mat{X})$ \Comment{$\mat{C}_{11} \leftarrow \mat{C}_{11} +\mat{X}$}\label{step:RRRxRRR:C11}
    \State $\mat{C}_{22} \leftarrow$ \texttt{RRR+RR} $(\mat{C}_{22},\mat{Y})$ \Comment{$\mat{C}_{22} \leftarrow \mat{C}_{22} +\mat{Y}$}\label{step:RRRxRRR:C22}
    \State $\mat{L}^\mat{X} \leftarrow \texttt{RRRxTS}(\mat{A}_{11},\mat{L}^\mat{B}_{12})$; $\mat{R}^\mat{X} \leftarrow \mat{R}^\mat{B}_{12}$
    \Comment{$\mat{X} \leftarrow \mat{A}_{11} \mat{B}_{12}$  in RR representation}
    \State $\mat{L}^\mat{Y} \leftarrow \mat{L}^\mat{A}_{12}$; $\mat{R}^\mat{Y} \leftarrow \texttt{TSxRRR}(\mat{R}^\mat{A}_{12},\mat{B}_{22})$
    \Comment{$\mat{Y} \leftarrow \mat{A}_{12} \mat{B}_{22}$  in RR representation
      revealing factorization}
    \State $\mat{C}_{12} \leftarrow \texttt{RR+RR}(\mat{X},\mat{Y})$ \Comment{$\mat{C}_{12} \leftarrow \mat{X}+\mat{Y}$}
    \State $\mat{L}^\mat{X} \leftarrow \texttt{RRRxTS}(\mat{A}_{11},\mat{L}^\mat{B}_{21})$; $\mat{R}^\mat{X} \leftarrow \mat{R}^\mat{B}_{21}$
    \Comment{$\mat{X} \leftarrow \mat{A}_{11} \mat{B}_{21}$ in RR representation}
    \State $\mat{L}^\mat{Y} \leftarrow \mat{L}^\mat{A}_{21}$; $\mat{R}^\mat{Y} \leftarrow \texttt{TSxRRR}(\mat{R}^\mat{A}_{21},\mat{B}_{22})$
    \Comment{$\mat{Y} \leftarrow \mat{A}_{21} \mat{B}_{22}$ in RR representation}
    \State $\mat{C}_{21} \leftarrow \texttt{RR+RR}(\mat{X},\mat{Y})$ \Comment{$\mat{C}_{21} \leftarrow \mat{X}+\mat{Y}$}
    \State \Return $\mat{C} \leftarrow 
    \begin{bmatrix}
      \mat{C}_{11}& \mat{C}_{12}\\
      \mat{C}_{21}& \mat{C}_{22}\\
    \end{bmatrix}$
\end{algorithmic}
}
\end{algorithm}
In steps~\ref{step:RRRxRRR:C11} and~\ref{step:RRRxRRR:C22}, a
$(s+t)$-quasiseparable matrix is added to a rank revealing factorization of rank
$(s+t)$. It should in general result in an RRR representation of an
$2(s+t)$-quasiseparable matrix. However, the matrix $\mat{C}$ is no more than
$(s+t)$-quasiseparable, hence the rank revealing factorization of the result, will
have rank only $s+t$. The reductions to RR representation, performed in
step~\ref{step:RRF1} of Algorithm~\ref{alg:RRxRR} and steps~\ref{step:RRF2}
and~\ref{step:RRF3} of Algorithm~\ref{alg:RR+RR}, ensure
that this factorization will be reduced to this size.

%
 Let  $T_\texttt{RRR$\times$RRR}(n,s,t)$ denote the time complexity of this
 algorithm.
%
 If $n\leq s+t$, then 
 $
    T_\texttt{RRRxRRR}(n,s,t) =
    T_\texttt{RRRExpand}(n,s)+T_\texttt{RRRExpand}(n,t) + O(n^\omega) 
    =O((s+t)^{\omega}).
    $
%
    Now consider the case $n>s+t$.
  \begin{align*}
    T_\texttt{RRRxRRR}(n,s,t) &=& 2 T_\texttt{RRRxRRR}(n/2,s,t) + 2T_\texttt{RRxRR}(n/2,s,t) + 2T_\texttt{RRRxTS}(n/2,s,t)   \\
    && + 2T_\texttt{RRR+RR}(n/2,s+t,s+t)+   2 T_\texttt{RR+RR}(n/2,s+t,s+t)\\
    &=& 2 T_\texttt{RRRxRRR}(n/2,s,t)  +O((s+t)^{\omega-1}n\log \frac{n}{s+t})
  \end{align*}

Consequently, $T_\texttt{RRRxRRR}(n,s,t)=O((s+t)^{\omega-1}n\log^2\frac{n}{s+t})$.

\subsection{Computing the inverse in RRR representation}

We consider the case, as in~\citep[\S~6]{EiGo99}, where the matrix to be inverted
has generic rank profile, i.e. all of its leading principal minors are
non-vanishing.
Under this assumption, Strassen's divide and conquer algorithm~\citep{Str69}
reduces the computation of the inverse to matrix multiplication.
More precisely, the inverse is recursively computed using the following block
$2\times 2$ formula:
$$
\begin{bmatrix}
  \mat{A}_{11} & \mat{A}_{12} \\
  \mat{A}_{21} & \mat{A}_{22} \\
\end{bmatrix}^{-1} 
= 
\begin{bmatrix}
  \mat{A}_{11}^{-1}+\mat{A}_{11}^{-1}\mat{A}_{12}\mat{D}^{-1}\mat{A}_{21}\mat{A}_{11}^{-1} & -\mat{A}_{11}^{-1}\mat{A}_{12}\mat{D}^{-1} \\
- \mat{D}^{-1}\mat{A}_{21}\mat{A}_{11}^{-1} & \mat{D}^{-1} \\
\end{bmatrix},
$$
where $\mat{D}=\mat{A}_{22}-\mat{A}_{21}\mat{A}_{11}^{-1}\mat{A}_{12}$.

This formula leads to a recursive algorithm that we adapt to the case of
quasiseparable matrices in RRR representation in algorithm~\ref{alg:RRR:inverse}.
\begin{algorithm}[htbp]
  \caption{\texttt{RRRinvert}: compute the inverse in RRR representation} \label{alg:RRR:inverse}
  \begin{algorithmic}[1]
    \Require{$\mat{A}$, an $n\times n$ $s$-quasiseparable strongly regular matrix in RRR representation,}
    \Ensure{$\mat{X} = \mat{A}^{-1}$, $s$-quasiseparable in RRR representation.}
    \If{$n\leq s$}
      \State $\mat{Y} \leftarrow \texttt{RRRExpand}(\mat{A})$
      \State \Return $\mat{X} \leftarrow \texttt{Invert}(\mat{Y})$
    \EndIf
    \State Split the matrix as
    $\mat{A}= \begin{bmatrix}\mat{A}_{11}&\mat{A}_{12}\\ \mat{A}_{21}&\mat{A}_{22} \end{bmatrix}$
    and $\mat{X}= \begin{bmatrix}\mat{X}_{11}&\mat{X}_{12}\\ \mat{X}_{21}&\mat{X}_{22} \end{bmatrix}$
    \State $\mat{Y}_{11} \leftarrow \texttt{RRRinvert}(\mat{A}_{11})$  \Comment{$\mat{Y}_{11}=\mat{A}_{11}^{-1}$}
    \State $\mat{Y}_{12} \leftarrow \texttt{RRRxRR}(\mat{Y}_{11},\mat{A}_{12})$ \Comment{$\mat{Y}_{12} \leftarrow \mat{A}_{11}^{-1}\mat{A}_{12}$}
    \State $\mat{Y}_{21} \leftarrow \texttt{RRRxRR}(\mat{A}_{21},\mat{Y}_{11})$ \Comment{$\mat{Y}_{21} \leftarrow \mat{A}_{21}\mat{A}_{11}^{-1}$}
    \State $\mat{Z}\leftarrow -\texttt{RRxRR}(\mat{A}_{21}\mat{Y}_{12})$ \Comment{$\mat{Z}\leftarrow -\mat{A}_{21}\mat{A}_{11}^{-1}\mat{A}_{12}$}
    \State $\mat{D}\leftarrow \texttt{RR+RR}(\mat{A}_{22},\mat{Z})$
    \Comment{$\mat{D}\leftarrow
      \mat{A}_{22}-\mat{A}_{21}\mat{A}_{11}^{-1}\mat{A}_{12}$} \label{step:invert:RR+RR}
    \State $\mat{X}_{22} \leftarrow \texttt{RRRinvert}(\mat{D})$ \Comment{$\mat{X}_{22} = \mat{D}^{-1}$}
    \State $\mat{X}_{21} \leftarrow -\texttt{RRRxRR}(\mat{X}_{22},\mat{Y}_{21})$ \Comment{$\mat{X}_{21} \leftarrow -\mat{D}^{-1}\mat{A}_{21}\mat{A}_{11}^{-1}$}
    \State $\mat{W} \leftarrow -\texttt{RRxRR}(\mat{Y}_{12},\mat{X}_{21})$ \Comment{$\mat{W} \leftarrow \mat{A}_{11}^{-1}\mat{A}_{12}\mat{D}^{-1}\mat{A}_{21}\mat{A}_{11}^{-1}$}
    \State $\mat{X}_{12} \leftarrow -\texttt{RRRxRR}(\mat{Y}_{12},\mat{X}_{22})$ \Comment{$\mat{X}_{12} \leftarrow -\mat{A}_{11}^{-1}\mat{A}_{12}\mat{D}^{-1}$}
    \State $\mat{X}_{11}\leftarrow \texttt{RRR+RR}(\mat{Y}_{11},\mat{W})$
    \Comment{$\mat{X}_{11}\leftarrow
      \mat{A}_{11}^{-1}+\mat{A}_{11}^{-1}\mat{A}_{12}\mat{D}^{-1}\mat{A}_{21}\mat{A}_{11}^{-1}
      $}
\State \Return $\mat{X} = 
\begin{bmatrix}
  \mat{X}_{11}&\mat{X}_{12}\\
  \mat{X}_{21}&\mat{X}_{22}\\
\end{bmatrix}$.
  \end{algorithmic}
\end{algorithm}
The fact that the inverse matrix $\mat{X}$ is itself $s$-quasiseparable, implies
that the matrix $\mat{D}$ is also $s$-quasiseparable and not
$2s$-quasiseparable, as the generic upper bound would say. The compression
happens in the \texttt{RR+RR} routine, at step~\ref{step:invert:RR+RR}. Hence
all operations except the recursive calls take $O(s^{\omega-1}n\log\frac{n}{s})$.
The overall complexity of Algorithm~\ref{alg:RRR:inverse} is therefore $T_\texttt{RRRInverse}(n,s)=O(s^{\omega-1}n\log^2\frac{n}{s})$.

\section{Computing with a Compact Bruhat representation}
\label{sec:CB}

\subsection{Construction of the generator}

We first propose in Algorithm~\ref{alg:bruhatgenerator} an evolution of Algorithm~\ref{alg:LTElim}
to compute the factors of the Bruhat generator  (without compression) for a left triangular matrix.
\begin{algorithm}[htbp]
{
  \caption{LT-Bruhat}
\label{alg:bruhatgenerator}
  \begin{algorithmic}[1]
    \Require{$\mat{A}$: an $n\times n$ matrix}
    \Ensure{$(\mathcal{L},\mathcal{R},\mathcal{U})$: a Bruhat generator for the left
      triangular part of $\mat{A}$}
    \State \textbf{if} $n=1$ \textbf{then} \textbf{return} $([0],[0],[0])$
    \State {Split $\mat{A} = \begin{bmatrix} \mat{A_{1}} & \mat{A_{2}}\\\mat{A_{3}} \end{bmatrix}$ 
      where $\mat{A_{3}}$ is $\lfloor \frac{n}{2} \rfloor \times \lfloor
      \frac{n}{2}\rfloor$}
    \State Decompose $\mat{A_{1}} =
    \mat{P_1} \begin{bmatrix}\mat{L_1}\\\mat{M_1}\end{bmatrix}\begin{bmatrix}\mat{U_1}&\mat{V_1}\end{bmatrix} \mat{Q_1}$ 
    \State $\mat{R_1}  \leftarrow \mat{P_1} \begin{bmatrix}
      \mat{I_{r_1}}\\&\mat{0}\end{bmatrix}\mat{Q_1}$ where $r_1=\text{rank}(\mat{A_1})$.
    \State $\begin{bmatrix} \mat{B_1}\\ \mat{B_2}\end{bmatrix}
    \leftarrow \mat{P_1}^T\mat{A_{2}}$ 
    \State $
    \begin{bmatrix}
      \mat{C_1}&\mat{C_2}
    \end{bmatrix}
    \leftarrow \mat{A_{3}}\mat{Q_1}^T$ 
    \State Here $A =
    \left[\begin{array}{cc|c}
        \mat{L_1} \backslash \mat{U_1}& \mat{V_1}& \mat{B_1}\\
        \mat{M_1}               & \mat{0}  & \mat{B_2}\\
        \hline
        \mat{C_1}               & \mat{C_2}& \\
      \end{array}\right]$. \label{step:partialPLUQ}
    \State $\mat{D}\leftarrow \mat{L_1}^{-1}\mat{B_1}$ 
    \State $\mat{E}\leftarrow \mat{C_1}\mat{U_1}^{-1}$ 
    \State $\mat{F}\leftarrow \mat{B_2}-\mat{M_1}\mat{D}$ 
    \State $\mat{G}\leftarrow \mat{C_2}-\mat{E}\mat{V_1}$ 
    \State Here $\mat{A}=
    \left[\begin{array}{cc|c}
        \mat{L_1} \backslash \mat{U_1}& \mat{V_1}& \mat{D}\\
        \mat{M_1}               & \mat{0}  & \mat{F}\\
        \hline
        \mat{E}               & \mat{G}& \\
      \end{array}\right]$.
    
    \State $\mat{H} \leftarrow \mat{P_1}  \begin{bmatrix} \mat{0}_{r_1 \times \frac{n}{2}} \\ \mat{F} \end{bmatrix}$
    \State $\mat{I} \leftarrow  \begin{bmatrix} \mat{0}_{r_1 \times \frac{n}{2}} & \mat{G} \end{bmatrix} \mat{Q_1}$
    \State\label{step:firstLT} $(\mathcal{L}_2,\mathcal{R}_2,\mathcal{U}_2) \leftarrow
    \texttt{LT-Bruhat}(\mat{H})$ 
    \State\label{step:secondLT} $(\mathcal{L}_3,\mathcal{R}_3,\mathcal{U}_3) \leftarrow
    \texttt{LT-Bruhat}(\mat{I})$ 
    \State $\mathcal{L} \leftarrow
    \begin{bmatrix}
      \mat{P}_1 
      \begin{bmatrix}
        \mat{L}_1&\mat{0} \\
        \mat{M}_1 &\mat{0}\\
      \end{bmatrix}
      \mat{Q}_1 & \mat{0}
      \\
      \LTP (
      \begin{bmatrix}
        \mat{E} & \mat{0}      
      \end{bmatrix}
      \mat{Q}_1 ) & \mat{0}
      \end{bmatrix}
+
\begin{bmatrix}  \mat{0}&\mathcal{L}_2\\\mathcal{L}_3\end{bmatrix}
$
    \State $\mathcal{U} \leftarrow
    \begin{bmatrix}
      \mat{P}_1
      \begin{bmatrix}\mat{U}_1&V_1\\\mat{0}&\mat{0} \end{bmatrix} \mat{Q}_1 & \LTP(\mat{P}_1
      \begin{bmatrix} \mat{D}\\\mat{0}  \end{bmatrix}) \\
      \mat{0}&\mat{0}
    \end{bmatrix}
+
\begin{bmatrix}  \mat{0}&\mathcal{U}_2\\\mathcal{U}_3\end{bmatrix}
$\label{step:calU}
    \State $\mathcal{R} \leftarrow  \begin{bmatrix}     \mathcal{R}_1 & \mathcal{R}_2\\ \mathcal{R}_3   \end{bmatrix}$
   \State \textbf{return} $(\mathcal{L},\mathcal{R},\mathcal{U})$
  \end{algorithmic}
}
\end{algorithm}

\begin{theorem}
  For any $n\times n$ matrix $\mat{A}$ with a left triangular part of
  quasiseparable order $s$, Algorithm~\ref{alg:bruhatgenerator} computes the
  Bruhat generator of the
  left triangular part of $\mat{A}$ in $\GO{s^{\omega-2}n^2}$ field operations.
\end{theorem}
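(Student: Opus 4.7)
The plan is to establish the two claims separately: first that the output triple $(\mathcal{L},\mathcal{R},\mathcal{U})$ satisfies the Bruhat generator relations of equations~\eqref{eq:storage:L}--\eqref{eq:storage:U} for a global PLUQ decomposition of the left triangular part of $\mat{A}$, and then that the field operation count meets the stated bound. Because the $\mathcal{R}$ factor is assembled exactly as in Algorithm~\ref{alg:LTElim}, Theorem~\ref{th:LTRPM} immediately implies that $\mathcal{R}$ equals the left triangular part of the rank profile matrix of $\mat{A}$, which is half of what is required.

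For the remaining correctness content, I would proceed by induction on $n$, arguing that the algorithm implicitly builds a global PLUQ decomposition $\mat{A}=\mat{P}\mat{L}\mat{U}\mat{Q}$ revealing the rank profile matrix. The first PLUQ step on $\mat{A}_1$ contributes pivots in the rows selected by $\mat{P}_1$ and columns selected by $\mat{Q}_1$, and the associated triangular factors populate the first summand of $\mathcal{L}$ (respectively $\mathcal{U}$) in step~17 (respectively step~18). The recursive calls on the updated blocks $\mat{H}$ and $\mat{I}$ return by induction the Bruhat generators of their left triangular parts, whose pivot supports lie in rows and columns disjoint from those chosen by $\mat{P}_1,\mat{Q}_1$; thus the two summands have disjoint supports and their sum faithfully represents the assembled $\mat{P}\begin{bmatrix}\mat{L}&\mat{0}\end{bmatrix}\mat{Q}$ and $\mat{P}\begin{bmatrix}\mat{U}\\\mat{0}\end{bmatrix}\mat{Q}$ factors of the global decomposition. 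The $\LTP$ operators appearing on the $\mat{E}$ and $\mat{D}$ blocks (the correction pointed to in the erratum footnote) ensure that entries reaching outside the anti-diagonal are truncated, exactly as required by definitions~\eqref{eq:storage:L} and~\eqref{eq:storage:U}; their validity follows from Lemmas~\ref{lem:leftproductup} and~\ref{lem:leftproductlow}, applied as in the proof of Lemma~\ref{lem:LEUstorage}.

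For the complexity bound, I would reuse the recurrence of Theorem~\ref{th:LTRPM}. The partial PLUQ on $\mat{A}_1$, the two triangular solves, and the Schur complement updates cost $\GO{r_1^{\omega-2} n^2}$ with $r_1 \leq s$. The extra assembly in steps~17--18 involves a constant number of multiplications of thin factors of inner dimension $r_1$ by permutations and triangular matrices, contributing $\GO{r_1^{\omega-1} n}$, which is subsumed. Denoting the total cost by $T(n,s)$ and letting $s_1,s_2$ be the left quasiseparable orders of $\mat{H}$ and $\mat{I}$, this yields
\[
T(n,s) \;\leq\; \alpha \, s^{\omega-2} n^2 \;+\; T(n/2, s_1) \;+\; T(n/2, s_2),
\]
and the argument from Theorem~\ref{th:LTRPM} (multiplication by a lower triangular matrix preserves quasiseparable order, and zeroing rows cannot increase it) gives $\max(s_1,s_2)\leq s$. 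Solving the recurrence by induction yields $T(n,s) = \GO{s^{\omega-2} n^2}$.

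The main obstacle I anticipate is the bookkeeping required to verify the assembly identities of steps~17--18: namely to check simultaneously that the recursive outputs $\mathcal{L}_2,\mathcal{L}_3,\mathcal{U}_2,\mathcal{U}_3$ have supports disjoint from those populated by the first PLUQ step, that the $\LTP$ operators truncate only entries that lie outside the left triangular region, and that the resulting pivot positions remain consistent with those recorded in $\mathcal{R}$. Once these compatibility conditions are verified, the identification with $\LTP(\mat{P}\begin{bmatrix}\mat{L}&\mat{0}\end{bmatrix}\mat{Q})$ and $\LTP(\mat{P}\begin{bmatrix}\mat{U}\\\mat{0}\end{bmatrix}\mat{Q})$ follows by inspection.
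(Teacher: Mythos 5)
Your outline follows the same route as the paper — $\mathcal{R}$ via Theorem~\ref{th:LTRPM}, induction over the recursive calls for $\mathcal{L}$ and $\mathcal{U}$, and the complexity recurrence of Theorem~\ref{th:LTRPM}, which you solve correctly — but the correctness half stops exactly where the substance of the proof lies. What must be shown is that the quantities assembled in step~\ref{step:calU} (and the analogous line for $\mathcal{L}$) equal $\LTP(\mat{P}\begin{smatrix}\mat{U}\\\mat{0}\end{smatrix}\mat{Q})$ and $\LTP(\mat{P}\begin{smatrix}\mat{L}&\mat{0}\end{smatrix}\mat{Q})$ for a \emph{single} PLUQ decomposition of the left triangular part of $\mat{A}$ that reveals its rank profile matrix. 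You reduce this to ``the two summands have disjoint pivot supports'' plus ``inspection'', and explicitly defer the bookkeeping; that is precisely the gap. Disjointness of supports is not sufficient: it holds equally well for the uncorrected formula of \cite{Per16} (the one missing the \LTP truncations of $\mat{D}$ and $\mat{E}$), which produces an incorrect generator, so no argument at that level of detail can certify the corrected assembly. Likewise, Lemmas~\ref{lem:leftproductup}--\ref{lem:leftproductlow} and the proof of Lemma~\ref{lem:LEUstorage} do not by themselves justify the truncations: one needs to know exactly where the blocks $\mat{P}_1\begin{smatrix}\mat{D}\\\mat{0}\end{smatrix}$ and $\begin{smatrix}\mat{E}&\mat{0}\end{smatrix}\mat{Q}_1$ sit inside the global factors to see that taking \LTP discards nothing inside the left triangle and matches the definition of the generator.

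The missing construction is the following. Writing $\mat{H}=\mat{P}_2\mat{L}_2\mat{U}_2\mat{Q}_2$ and $\mat{I}=\mat{P}_3\mat{L}_3\mat{U}_3\mat{Q}_3$ for decompositions revealing their rank profile matrices, one uses that the first $r_1$ rows of $\mat{P}_1^T\mat{H}$ vanish to write $\mat{P}_1^T\mat{P}_2\mat{L}_2=\begin{smatrix}\mat{0}\\ \mat{\bar P}_2\mat{\bar L}_2\end{smatrix}$ with $\mat{\bar P}_2$ a permutation and $\mat{\bar L}_2$ lower triangular, and symmetrically $\mat{U}_3\mat{Q}_3\mat{Q}_1^T=\begin{smatrix}\mat{0}&\mat{\bar U}_3\mat{\bar Q}_3\end{smatrix}$. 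Substituting these into the partial factorization available at step~\ref{step:partialPLUQ} yields a $3\times 3$ block triangular factorization of $\begin{smatrix}\mat{A}_1&\mat{A}_2\\\mat{A}_3&\end{smatrix}$ which, after the row and column block cyclic shifts of \citep[Algorithm~1]{DPS13}, becomes a PLUQ decomposition revealing the rank profile matrix of the whole matrix. Only from this explicit global decomposition does one read off $\mat{P}\begin{smatrix}\mat{U}\\\mat{0}\end{smatrix}\mat{Q}$ as the sum of the first-level block (built from $\mat{U}_1,\mat{V}_1,\mat{D}$) and the blocks $\mat{P}_2\begin{smatrix}\mat{U}_2\\\mat{0}\end{smatrix}\mat{Q}_2$ and $\mat{P}_3\begin{smatrix}\mat{U}_3\\\mat{0}\end{smatrix}\mat{Q}_3$, so that applying \LTP and the induction hypothesis gives exactly step~\ref{step:calU}, and symmetrically for $\mathcal{L}$. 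Your complexity analysis is fine, but without this assembly argument the theorem is not proved.
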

\begin{proof}
The correctness of $\mathcal{R}$ is proven in Theorem~\ref{th:LTRPM}.
We will prove by induction the correctness of $\mathcal{U}$, noting that the correctness of
$\mathcal{L}$ works similarly.

Let $\mat{H}=\mat{P}_2\mat{L}_2\mat{U}_2\mat{Q}_2$ and
$\mat{I}=\mat{P}_3\mat{L}_3\mat{U}_3\mat{Q}_3$ be PLUQ decompositions of
$\mat{H}$ and $\mat{I}$ revealing their rank profile matrices. 
Assume that Algorithm LT-Bruhat is correct in the two recursive calls~\ref{step:firstLT}
and~\ref{step:secondLT}, that is 
$$
\begin{array}{ll}
\mathcal{U}_2 = \LTP(\mat{P}_2\begin{bmatrix} \mat{U}_2\\\mat{0}\end{bmatrix} \mat{Q}_2), & 
\mathcal{U}_3 = \LTP(\mat{P}_3\begin{bmatrix} \mat{U}_3\\\mat{0}\end{bmatrix} \mat{Q}_3),\\
\mathcal{L}_2 = \LTP(\mat{P}_2\begin{bmatrix} \mat{L}_2&\mat{0}\end{bmatrix} \mat{Q}_2), &
\mathcal{L}_3 = \LTP(\mat{P}_3\begin{bmatrix} \mat{L}_3&\mat{0}\end{bmatrix} \mat{Q}_3).\\
\end{array}
$$

At step~\ref{step:partialPLUQ}, we have
{
$$
\begin{bmatrix}
  \mat{A}_1 & \mat{A}_2\\
  \mat{A}_3 &*
\end{bmatrix}
 =
\begin{bmatrix}\mat{P}_1\\&\mat{I}_{\frac{n}{2}}\end{bmatrix}
\left[\begin{array}{cc|c}
  \mat{L}_1&&\\ 
  \mat{M}_1 &\mat{I}_{\frac{n}{2}-r_1}\\
\hline
  \mat{E} & \mat{0}&\mat{I}_{\frac{n}{2}}
\end{array}\right]
\left[\begin{array}{cc|c}
  \mat{U}_1 & \mat{V}_1 &  \mat{D}\\
&\mat{0} &  \mat{F}\\
\hline
&\mat{G}&
\end{array}
\right]
\begin{bmatrix} \mat{Q}_1\\ & \mat{I}_{\frac{n}{2}} \end{bmatrix}
$$
}
As the first $r_1$ rows of $\mat{P}_1^T\mat{H}$ are zeros, there exists $\mat{\bar P}_2$ a
permutation matrix and $\mat{\bar L}_2$, a lower triangular matrix, such that
$\mat{P}_1^T\mat{P}_2\mat{L}_2 = 
\begin{bmatrix} \mat{0}_{r_1\times \frac{n}{2}} \\ \mat{\bar P}_2 \mat{\bar L}_2 \end{bmatrix}$. 
Similarly, there exsist $\mat{\bar Q}_3$, a
permutation matrix and $\mat{\bar U}_3$, an upper triangular matrix, such that $\mat{U}_3\mat{Q}_3\mat{Q}_1^T =
\begin{bmatrix}
  \mat{0}_{\frac{n}{2}\times r_1} &
  \mat{\bar U}_3 \mat{\bar Q}_3
\end{bmatrix}$.
Hence
{
  $$
  \begin{bmatrix}
  \mat{A}_1 & \mat{A}_2\\
  \mat{A}_3 &*
\end{bmatrix}
 =
\begin{bmatrix}\mat{P}_1
\\&\mat{P}_3\end{bmatrix}
\left[\begin{array}{cc|c}
  \mat{L}_1&&\\ 
  \mat{M}_{1} & \mat{\bar P}_2\mat{\bar L}_2\\
\hline
  \mat{P}_3^T\mat{E} & \mat{0}&\mat{L}_3
\end{array}\right] 
\left[\begin{array}{cc|c}
  \mat{U}_1 & \mat{V}_1 &  \mat{D}\mat{Q}_2^T\\
&\mat{0} &  \mat{U_2}\\
&\mat{\bar U}_3\mat{\bar Q}_3
\end{array}
\right]
\begin{bmatrix} \mat{Q}_1\\ & \mat{Q}_2 \end{bmatrix}\\
$$
}
Setting $\mat{N}_1 = \mat{\bar P}_2^T\mat{M}_1$ and $\mat{W}_1 = \mat{V}_1
\mat{\bar Q}_3^T$, we have
{
$$
\begin{bmatrix}
  \mat{A}_1 & \mat{A}_2\\
  \mat{A}_3 &*
\end{bmatrix}
=
\begin{bmatrix}\mat{P}_1
  \begin{bmatrix}
    \mat{I}_{r_1}\\&\mat{\bar P}_2
  \end{bmatrix}
\\&\mat{P}_3\end{bmatrix}
\left[\begin{array}{ccc}
  \mat{L}_1\\ 
  \mat{N}_{1} & \mat{\bar L}_2\\
\hline
  \mat{E} &\mat{0}& \mat{\mat{L_3}}
\end{array}\right] 
\left[\begin{array}{cc|c}
  \mat{U}_1 & \mat{W}_1 &  \mat{D}\mat{Q}_2^T\\
&\mat{0} &  \mat{U_2}\\
&\mat{\bar U}_3
\end{array}
\right]
\begin{bmatrix}
  \begin{bmatrix}
    \mat{I}_{r_1}\\&\mat{\bar Q}_3
  \end{bmatrix}
\mat{Q}_1\\ & \mat{Q}_2 \end{bmatrix}.
$$
}
A PLUQ of $
\begin{smatrix}
  \mat{A}_1 & \mat{A}_2\\ \mat{A}_3
\end{smatrix}
$ revealing its rank profile matrix is then obtained from this decomposition by
a row block cylic-shift on the second factor and a column block cyclic shift on
the third factor as in \citep[Algorithm~1]{DPS13}.

Finally,
{
\begin{align*}
\mat{P} \begin{bmatrix}  \mat{U}\\0\end{bmatrix}\mat{Q}
&=&
\begin{bmatrix}  \mat{P}_1\\&\mat{I}_{\frac{n}{2}}\end{bmatrix}
\begin{bmatrix}
  \mat{U}_1 & \mat{V}_1 & \mat{D} \\
  & \mat{0} & \mat{\bar P}_2 \mat{U}_2\mat{Q}_2 \\
  &\mat{P}_3 \mat{\bar U}_3  \mat{\bar Q}_3\\
\mat{0}& \mat{0}&\mat{0}
\end{bmatrix}
\begin{bmatrix}
  \mat{Q}_1\\&\mat{I}_{\frac{n}{2}}
\end{bmatrix}\\
&=&
\begin{bmatrix}
\mat{P}_1  \begin{bmatrix}
      \mat{U}_1 & \mat{V}_1\\
      \mat{0}&\mat{0}
  \end{bmatrix}
\mat{Q}_1 & \mat{P}_1
\begin{bmatrix}
  \mat{D}\\\mat{0}
\end{bmatrix}
 \\
  \mat{0}& \mat{0} \
\end{bmatrix}
+\begin{bmatrix}
& \mat{P}_2
\begin{bmatrix}
  \mat{U}_2\\ \mat{0}
\end{bmatrix}
 \mat{Q}_2\\
\mat{P}_3
\begin{bmatrix}
  \mat{U}_3\\\mat{0}
\end{bmatrix}
\mat{Q}_3
\end{bmatrix}.
\end{align*}
}
Hence
{
$\LTP(\mat{P}\mat{U}\mat{Q}) =
\begin{bmatrix}
 \mat{P}_1
 \begin{bmatrix}
   \mat{U}_1 & \mat{V}_1 \\\mat{0}&\mat{0}
 \end{bmatrix}\mat{Q}_1
 & \LTP(\mat{P}_1
 \begin{bmatrix} \mat{D}\\\mat{0} \end{bmatrix}) \\
   \mat{0} & \mat{0} \\
\end{bmatrix} 
+ \begin{bmatrix}
&  \mathcal{U}_2\\
\mathcal{U}_3
\end{bmatrix}.
$
}


  
The complexity analysis is exactly that of Theorem~\ref{th:LTRPM}.
\end{proof}

The computation of a compact Bruhat generator, as shown in
Algorithm~\ref{alg:compactbruhatgenerator}, is then directly obtained by combining
Algorithm~\ref{alg:bruhatgenerator} with Algorithm~\ref{alg:compress}.

\begin{algorithm}[htbp]
  \caption{Compact Bruhat generator}
\label{alg:compactbruhatgenerator}
  \begin{algorithmic}[1]
    \Require{$\mat{A}$: an $n\times n$ left triangular matrix of quasiseparable order $s$}
    \Ensure{ $(\mat{D}^{{\mat{C}_\mat{L}}},\mat{S}^{{\mat{C}_\mat{L}}},\mat{T}^{{\mat{C}_\mat{L}}}), \mat{R}_\mat{L},(\mat{D}^{{\mat{E}_\mat{L}}},\mat{S}^{{\mat{E}_\mat{L}}},\mat{T}^{{\mat{E}_\mat{L}}})$ : a Compact Bruhat generator for  $\mat{L}=\LTP(\mat{J}_n\mat{A})$}
    \Ensure{ $(\mat{D}^{{\mat{C}_\mat{U}}},\mat{S}^{{\mat{C}_\mat{U}}},\mat{T}^{{\mat{C}_\mat{U}}}), \mat{R}_\mat{U},(\mat{D}^{{\mat{E}_\mat{U}}},\mat{S}^{{\mat{E}_\mat{U}}},\mat{T}^{{\mat{E}_\mat{U}}})$ : a Compact Bruhat generator for  $\mat{U}=\LTP(\mat{A}\mat{J}_n)$}
\State $\mat{L} \leftarrow \LTP(\mat{J}_n\mat{A})$
\State $\mat{U} \leftarrow \LTP(\mat{A}\mat{J}_n)$
\State $(\mathcal{L}_\mat{L},\mathcal{R}_\mat{L},\mathcal{U}_\mat{L})\leftarrow\texttt{LT-Bruhat}(\mat{L})$
\State $(\mathcal{L}_\mat{U},\mathcal{R}_\mat{U},\mathcal{U}_\mat{U})\leftarrow\texttt{LT-Bruhat}(\mat{U})$
\State $(\mat{D}^{{\mat{C}_\mat{L}}},\mat{S}^{{\mat{C}_\mat{L}}},\mat{T}^{{\mat{C}_\mat{L}}},\mat{P}^{\mat{C}_\mat{L}})\leftarrow \texttt{Compress-to-Block-Bidiagonal}(\mathcal{L}_\mat{L})$
\State $(\mat{D}^{{\mat{E}_\mat{L}}},\mat{S}^{{\mat{E}_\mat{L}}},\mat{T}^{{\mat{E}_\mat{L}}},\mat{P}^{\mat{E}_\mat{L}})\leftarrow (\texttt{Compress-to-Block-Bidiagonal}(\mathcal{U}_\mat{L}^T))^T$
\State $\mat{R}_\mat{L}\leftarrow \begin{bmatrix}  \mat{I}_r &
  \mat{0}\end{bmatrix} \mat{P}^{\mat{C}_\mat{L}}
\mathcal{R}_\mat{L}^T\mat{P}^{\mat{E}_\mat{L}} \begin{bmatrix}  \mat{I}_r \\ \mat{0}\end{bmatrix}$
\State $\mat{R}_\mat{U}\leftarrow \begin{bmatrix}  \mat{I}_r &
  \mat{0}\end{bmatrix} \mat{P}^{\mat{C}_\mat{U}}
\mathcal{R}_\mat{U}^T\mat{P}^{\mat{E}_\mat{U}} \begin{bmatrix}  \mat{I}_r \\ \mat{0}\end{bmatrix}$

  \end{algorithmic}
\end{algorithm}

\subsection{Multiplication by a tall and skinny matrix}
\label{sec:CBxTS}
We consider the multiplication of an $s$-quasiseparable matrix in Compact
Bruhat representation by an $n\times t$ dense rectangular matrix ($t\leq s$),
and show that is can be performed in $\GO{st^{\omega-2}n}=\GO{s^{\omega-1}n}$ field operations.

The Compact Bruhat representation stores a representation of two left triangular
matrices, corresponding to the upper and lower triangular parts of the
matrix. Hence it suffices to show how to multiply an $s$-quasiseparable left triangular matrix
in Compact Bruhat representation with a tall and skinny matrix.

Using the Definition~\ref{def:CB}, this means computing
$$
\mat{C}=\text{Left}(\mat{C}^\mat{A}\mat{R}^\mat{A}\mat{E}^\mat{A})\mat{B}
$$
where $B$ is dense $n\times t$.
Without the \texttt{Left} operator, the target complexity $O(s^{\omega-1}n)$ would
be reached by first computing the product $\mat{E}^\mat{A}\mat{B}$ and then
applying $\mat{R}^\mat{A}$ and $\mat{C}^\mat{A}$ on the left.
However because of the \texttt{Left} operator, each row of the result matrix
$\mat{C}$ involves a distinct partial sum of the product
$\mat{E}^\mat{A}\mat{B}$: $$\mat{C}_{i,*} = \mat{C}^\mat{A}_{i,*}\mat{R}^\mat{A} \left(\sum_{j=1}^{n-i}\mat{E}^\mat{A}_{*,j}\mat{B}_{j,*}\right).$$
We will therefore avoid computing the accumulation in this product, keeping
point-wise products available in memory. In order to reach the target
complexity, the products of dimension $s$ will be computed with accumulation,
keeping the terms of the unevaluated sum available at the level of  size $s$ blocks.

Cutting these matrices on a grid of  size $s$, let $N=\lceil n/s\rceil$ and
$\mat{C}=\begin{bmatrix}  \mat{C}_1&\dots&\mat{C}_{N}\end{bmatrix}^T$,
$\mat{E}^\mat{A}=\begin{bmatrix}\mat{E}^\mat{A}_1&\dots&\mat{E}^\mat{A}_{N}\end{bmatrix}$,
$\mat{C}^\mat{A}=\begin{bmatrix}\mat{C}^\mat{A}_1&\dots&\mat{C}^\mat{A}_{N}\end{bmatrix}^T$ 
and
$\mat{B}=\begin{bmatrix}  \mat{B}_1&\dots&\mat{B}_{N}\end{bmatrix}^T$.
We have
$$
\mat{C}_i = \mat{C}^\mat{A}_i\mat{R}^\mat{A}\sum_{j=1}^{N -i}\mat{E}^\mat{A}_j
\mat{B}_j + \text{Left}\left(\mat{C}^\mat{A}_i\mat{R}^\mat{A}\mat{E}^\mat{A}_{N - i + 1}\right)\mat{B}_{N - i + 1}.
$$
Each of these blocks $\mat{C}_i$ are then computed as shown in Algorithm~\ref{alg:LeftCBxTS}.
\begin{algorithm}[htbp]
   \caption{\texttt{LeftCBxTS} }
   \label{alg:LeftCBxTS}
\begin{algorithmic}[1]
     \Require{$\mat{A}$, an $n\times n$ $s$-quasiseparable left triangular matrix:  $\mat{A}=\text{Left}(\mat{C}^\mat{A}\mat{R}^\mat{A}\mat{E}^\mat{A})$}
     \Require{$\mat{B}$, an $n\times t$ matrix}
     \Ensure{$\mat{C} \leftarrow \mat{A} \mat{B}$, an $n\times t$ dense tall and skinny
     matrix}
\For{$j=1\dots N-1$}
     \State  $\mat{X}_j \leftarrow \mat{E}^\mat{A}_j\mat{B}_j$ \label{step:ptwiseprod}
\Comment{ in a compact representation $ \mat{X}_j = \mat{D}_j^\mat{X}+\mat{T}^\mat{E}\mat{S}_j^\mat{X}$}
\State \label{step:merge} $\mat{Y}_j \leftarrow \mat{R}^\mat{A}
\mat{X}_j$\Comment{expand   $\mat{D}_j^\mat{X}+\mat{T}^\mat{E}\mat{S}_j^\mat{X}$ and apply the permutation
  $\mat{R}^\mat{A}$}
\EndFor
\State \label{step:prefix} compute all partial sums of these blocks: $\mat{Z}_i = \sum_{j=1}^{N-i}\mat{Y}_j$;
\State \label{step:leftmult} apply $\mat{C}_i^\mat{A}$ to the left: $\mat{V}_i = \mat{C}^\mat{A}_i \mat{Z}_i$;
\State \label{step:trailing} add the trailing term, $\mat{C}_i = \mat{V}_i   +
\text{Left}\left(\mat{C}_i^\mat{A}\mat{R}^\mat{A}\mat{E}^\mat{A}_{N - i + 1})\mat{B}_{N - i + 1}\right)$.
\State \Return $\mat{C}\leftarrow 
\begin{bmatrix}  \mat{C}_1 \\ \vdots \\ \mat{C}_{N}\end{bmatrix}
$
\end{algorithmic}
\end{algorithm}

In the compact Bruhat representation, the row echelon form $\mat{E}^\mat{A}$ is stored
in the form $\mat{E}^\mat{A} = \mat{D}^\mat{E} + \mat{T}^\mat{E} \mat{S}^\mat{E}$ where $\mat{D}$ and
$\mat{S}$ are block diagonal with blocks of dimension $s\times k_j$ where
$k_j\geq s$.

\begin{description}

\item[Step~\ref{step:ptwiseprod}] reduces to computing
$\mat{D}_j^\mat{X} = \mat{D}_j^\mat{E} \mat{B}_j$ and 
$\mat{S}_j^\mat{X} = \mat{S}_j^\mat{E} \mat{B}_j$ such that 
\begin{equation}
\label{eq:Xsnake}
  \mat{X}_j = \mat{D}_j^\mat{X}+\mat{T}^\mat{E}\mat{S}_j^\mat{X}.
\end{equation}
Each of these products requires $O(k_jst^{\omega-2})$ field operations, hence
Step~\ref{step:ptwiseprod} costs $O(nst^{\omega-2})$ field operations. After Step~\ref{step:ptwiseprod}, the
matrix~$\mat{X}$ is stored in a compact representation, given by
equation~\eqref{eq:Xsnake}, requiring only $O(nt)$ space.

\item[Step~\ref{step:merge}] does not involve any field operation as the multiplication on
the left by $\mat{T}^\mat{E}$ and the final sum act on matrices of
non-overlapping support. The overall amount of data being copied is linear in
the number of non-zero elements: $O(nt)$.

\item[Step~\ref{step:prefix}] can be achieved by computing the prefix sum of the
$\mat{Y}_i$'s: $\mat{Z}_1 = \mat{Y}_1$ and $\mat{Z}_i = \mat{Z}_{i-1}+\mat{Y}_i$. 
Each step involves $O(st)$ additions (the number of non zero elements in
$\mat{Y}_i$), hence Step~\ref{step:prefix} costs $O(nt)$ field operations.

\item[Step~\ref{step:leftmult}] is a sequence of $N$ products of an $s\times r$ 
matrix $\mat{C}^\mat{A}_i =\mat{D}_i^\mat{C} + \mat{S}_i^\mat{C}\mat{T}^\mat{C}$ 
by an $r\times r$ matrix $\mat{Z}_i$. As both $\mat{D}_i^\mat{C}$ and
$\mat{S}_i^\mat{C}$ have only $s$ continuous non-zero columns, each of these
product costs $O(s^2t^{\omega-2})$ and the overall cost is $O(nst^{\omega-2})$.

\item[Step~\ref{step:trailing}] is achieved by computing the $s\times s$ factor
$\text{Left}(\mat{C}_i^\mat{A}\mat{R}^\mat{A}\mat{E}^\mat{A}_{N -  i + 1})$ explicitly
in $O(s^\omega)$, and then applying it to $\mat{B}_{N - i + 1}$ in $O(s^2t^{\omega-2})$.

\end{description}

Overall the cost of algorithm~\ref{alg:LeftCBxTS} is $O(nts^{\omega-2})$ field operations.

\begin{corollary}
  An $s$-quasiseparable matrix in Compact Bruhat representation can be
  multiplied
  \begin{enumerate}[a.]
  \item by a vector in time $\GO{ns}$
  \item by a dense $n\times m$ matrix in time $\GO{s^{\omega-2}nm}$.
  \item by a another $s$-quasiseparable matrix matrix in time $\GO{s^{\omega-2}n^2}$.
  \end{enumerate}
\end{corollary}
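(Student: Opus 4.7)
The three parts will all reduce, in the end, to invoking Algorithm~\ref{alg:LeftCBxTS} on the two left triangular matrices that compose a full quasiseparable matrix in Compact Bruhat form. Recall that, by the construction given in Algorithm~\ref{alg:compactbruhatgenerator}, a quasiseparable $\mat{M}$ is stored as two Compact Bruhat generators: one for $\mat{L}=\LTP(\mat{J}_n\mat{M})$ and one for $\mat{U}=\LTP(\mat{M}\mat{J}_n)$, plus the $n$ coefficients of the main diagonal. Since the strict lower triangular part equals $\mat{J}_n\mat{L}$ and the strict upper triangular part equals $\mat{U}\mat{J}_n$, any product $\mat{M}\mat{B}$ is obtained as $\mat{J}_n(\mat{L}(\mat{J}_n\mat{B}))+(\mat{U}\mat{J}_n)\mat{B}$ plus the trivial contribution of the diagonal, so each of the three claims reduces (up to free permutations and an $O(nm)$ diagonal contribution) to the cost of applying Algorithm~\ref{alg:LeftCBxTS} to a left triangular $s$-quasiseparable matrix.

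For (a), I would simply invoke Algorithm~\ref{alg:LeftCBxTS} with $t=1$. Its proved cost is $\GO{n t s^{\omega-2}} = \GO{n s^{\omega-2}}$, which is bounded by $\GO{ns}$ since $\omega-2 \le 1$. Done twice (once for $\mat{L}$, once for $\mat{U}$) plus the $\GO{n}$ diagonal contribution yields the claimed $\GO{ns}$.

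For (b), the plan is to chop the $n\times m$ dense matrix $\mat{B}$ column-wise into $\lceil m/s\rceil$ blocks of width at most $s$. To each block I apply Algorithm~\ref{alg:LeftCBxTS} with $t=s$, paying $\GO{ns\cdot s^{\omega-2}}=\GO{ns^{\omega-1}}$ per block. Summing over the $\GO{m/s}$ blocks gives $\GO{s^{\omega-2}nm}$, and again we do this once for $\mat{L}$ and once for $\mat{U}$, with an $\GO{nm}$ diagonal cost absorbed into the total.

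For (c), the simplest route is to first expand the second operand from Compact Bruhat to a dense $n\times n$ matrix, then invoke (b) with $m=n$. The expansion itself is obtained by running (b) on the identity matrix $\mat{I}_n$, which costs $\GO{s^{\omega-2}n^2}$. Applying (b) again on the resulting dense matrix costs another $\GO{s^{\omega-2}n^2}$, giving the announced bound. The only substantive step is convincing oneself that the expansion-via-identity is legitimate; this is immediate since applying Algorithm~\ref{alg:LeftCBxTS} to $\mat{I}_n$ produces each column of the underlying left triangular matrix in dense form, at the claimed cost. No new ideas beyond the blocking argument and the correctness of Algorithm~\ref{alg:LeftCBxTS} are required, so the only care needed is bookkeeping of edge cases when $m<s$ or $s \nmid m$, which are absorbed in the $\GO{\cdot}$.
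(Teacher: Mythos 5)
Your proposal is correct and follows essentially the same route as the paper's proof: part (a) is \texttt{LeftCBxTS} with $t=1$, part (b) slices the dense matrix into $\lceil m/s\rceil$ blocks of width at most $s$, and part (c) expands one operand to dense form and reuses (b), all with the same cost bookkeeping. Only a trivial slip: since the strict lower triangular part of $\mat{M}$ equals $\mat{J}_n\mat{L}$ with $\mat{L}=\LTP(\mat{J}_n\mat{M})$, its contribution to the product is $\mat{J}_n(\mat{L}\mat{B})$ rather than $\mat{J}_n(\mat{L}(\mat{J}_n\mat{B}))$, which of course changes nothing in the complexity.
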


\begin{proof}\ 
 \begin{enumerate}[a.]
   \item
  Specializing this \texttt{LeftCBxTS} algorithm with $t=1$ yields an algorithm for
multiplying by vector in time $\GO{ns}$.
\item
Splitting the dense matrix in $\lceil \frac{m}{s}\rceil$ slices and
applying \texttt{LeftCBxTS} on each of them takes
$\GO{s^{\omega-1}n \lceil\frac{m}{s}\rceil}=\GO{s^{\omega-2}nm}$.
\item Expanding one of the two matrices into a dense representation and
  multiplying it to the other one takes $\GO{s^{\omega-2}n^2}$.
 \end{enumerate}
\end{proof}

The last item in the corollary improves over the complexity of multiplying two
dense matrices in \GO{n^\omega}.
However, the result being itself a $2s$-quasiseparable matrix, it could be
presented in a Compact Bruhat representation. Hence the target cost for this operation
is far below: \GO{s^{\omega-1}n} since both input and output have size \GO{sn}.
Applying similar techniques as in Algorithm~\ref{alg:LeftCBxTS}, we could only
produce the output as two terms of the form $\LTP(\mat{L}\mat{R})$ where
$\mat{L}$ and $\mat{R}^T$ are 
$n\times(\text{rank}(\mat{A})+\text{rank}(\mat{B}))$ in time \GO{s^{\omega-1}n}, but
we were unable to perform the compression to a Compact Bruhat representation
within this target complexity for the moment.

{\small
  \bibliographystyle{elsarticle-harv}
\bibliography{qsprod}
}
\end{document}